%% file: frame_coded_locomotion.tex
\documentclass[journal,10pt]{IEEEtran}

\usepackage{amsmath,amssymb,amsfonts,mathtools,bm}
\usepackage{amsthm}
\usepackage{graphicx}
\usepackage{booktabs,tabularx,array,multirow}
\usepackage{cite}
\usepackage{url}
\usepackage{microtype}
\usepackage{amsmath,amssymb}
\usepackage{graphicx}
\usepackage{tikz}
\usetikzlibrary{
  arrows.meta,
  calc,
  decorations.pathmorphing
}
\usepackage[caption=false,font=footnotesize]{subfig}
\usepackage{balance}
\usepackage{enumitem}
\usepackage{xcolor}

\newtheorem{theorem}{Theorem}
\newtheorem{proposition}[theorem]{Proposition}
\newtheorem{lemma}[theorem]{Lemma}
\newtheorem{corollary}[theorem]{Corollary}
\newtheorem{definition}[theorem]{Definition}

\newcommand{\R}{\mathbb{R}}
\newcommand{\E}{\mathbb{E}}
\newcommand{\Pp}{\mathbb{P}}
\newcommand{\tr}{\operatorname{tr}}
\newcommand{\rank}{\operatorname{rank}}
\newcommand{\diag}{\operatorname{diag}}

\newcommand{\argmax}{\operatorname*{arg\,max}}

\newcommand{\Cov}{\operatorname{Cov}}
\newcommand{\Var}{\operatorname{Var}}

\newcommand{\nullsp}{\operatorname{null}}

\newcommand{\Dkl}{D_{\mathrm{B}}}
\newcommand{\loc}{\mathrm{loc}}

\newcommand{\fail}{\mathrm{fail}}
\newcommand{\T}{\mathsf{T}}

\newcommand{\Dset}{\mathcal{D}}
\newcommand{\cN}{\mathcal{N}}
\newcommand{\bmat}[1]{\begin{bmatrix}#1\end{bmatrix}}

\begin{document}

\title{Frame-Coded Legged Locomotion over Noisy Terrain}

\author{Lav~R.~Varshney%
\thanks{L. R. Varshney is with the Department of Electrical and Computer Engineering and the AI Innovation Institute, Stony Brook University, Stony Brook, NY 11794 USA.}
\thanks{ChatGPT 5.6-Sol was used to support writing, drawing figures, experiment implementation, and mathematical development.}}

\maketitle

\begin{abstract}
Open-loop multilegged locomotion over rough terrain has been interpreted as matter transport over a noisy channel: leg--ground interactions are discrete basic active contacts, terrain deletes or perturbs those contacts, and spatial redundancy concentrates the resulting thrust and arrival time.  That construction is repetition-like, however, because every module carries essentially the same scalar locomotion task.  It consequently provides neither a positive task rate nor a decoder that changes with the surviving contact set.  This paper formulates locomotion instead as a quantized finite-frame expansion with erasures.  A $d$-dimensional body-level command is mapped into $N>d$ heterogeneous local contact commands.  Rough terrain erases or corrupts frame coefficients, while a contact-gated compliant morphology physically realizes the weighted active-subframe decoder.  For a linear-Gaussian model, mechanical equilibrium is exactly the posterior mean, tangent stiffness is posterior precision, and mechanical compliance is posterior covariance.  Equal-norm Parseval frames are shown to be minimax optimal against one missing contact, two-contact robustness is governed by frame coherence, and a harmonic frame gives a directly realizable gait family.  For independently surviving contacts of probability $q$, random Gaussian gait frames admit exact reconstruction at every analog dimension rate $R<q$, with a binomial reliability exponent, whereas recovery of arbitrary commands is impossible for $R>q$.  Residual contact noise yields an asymptotic per-mode amplification $1/(q-R)$ and a vanishing mechanical stiffness margin at the threshold.  Mutual information is the logarithmic gain in stiffness volume, leading to an information--locomotion inequality and an exact incremental-redundancy rule that directs the next gait component toward the softest task-relevant unresolved mode.  The resulting analog frame-coding theorem establishes a finite relative redundancy and converse as part of a fundamental limit theory of legged locomotion.
\end{abstract}
\begin{IEEEkeywords}
Analog coding, compliant mechanisms, finite frames, I--MMSE, legged locomotion, morphological computation, random matrices
\end{IEEEkeywords}

\section{Introduction}
Multilegged matter transport is a core approach to robotic and natural locomotion.  To understand the limits of this problem, Chong et al.\ introduced the basic active contact (\emph{bac}) as a discrete terrain interaction and modeled rough terrain as a source of delayed, shortened, and completely missing contacts \cite{ChongEtAl2023}; serially connected leg pairs share contact disturbances and thereby produce reliable and punctual open-loop transport as spatial redundancy grows.  In their communication theory-inspired mathematical model, however, the modules have an identical instantaneous thrust function and so their coupling behaves simply as a moving-average filter.  Like a repetition code, increasing the number of contacts drives the useful dimension per contact to zero.  The limiting thrust constant that is established is therefore a drift level, rather than a positive task capacity as in information theory and so a broader behavioral repertoire is not considered.

Here we aim to move beyond repetition to consider a genuine redundant representation, building on the finite-frame source--channel construction of Goyal, Kova\v{c}evi\'c, and Kelner \cite{GoyalKovacevicKelner2001}. A walking robot has the same architecture once a global body command is distributed into heterogeneous local leg commands.  This approach from harmonic analysis builds on quantized overcomplete expansions where a vector is expanded into more coefficients than its dimension, the coefficients are quantized and some are erased, and the source is reconstructed from the surviving subframe \cite{GoyalVetterliThao1998}.   Kova\v{c}evi\'c, Dragotti, and Goyal extend to oversampled filter banks and hence to signal streams with memory \cite{KovacevicDragottiGoyal2002}; that extension is appropriate for continuous gait sequences and colored terrain (thinking of rugged terrain as a source of noise), but the finite-frame model is the starting point.

The central analogy driving the paper is therefore:
\begin{itemize}
\item global locomotion command $\longleftrightarrow$ source vector,
\item 
heterogeneous leg commands $\longleftrightarrow$ frame coefficients
\item missing or weak contacts $\longleftrightarrow$ erasures and noise, and
\item compliant-body equilibrium $\longleftrightarrow$ active-subframe synthesis.
\end{itemize}
The last correspondence is the essential physical computation (or embodied intelligence) step to move beyond repetition: the physical body itself enacts the decoding computation.  A loaded elastic element whose extension depends on a body coordinate vector contributes a rank-one stiffness whereas an unloaded foot contributes nothing.  The current force paths thus assemble the normal equations for the surviving contacts, and damped relaxation physically solves them.  In a linear-Gaussian model, this physical equilibrium is not just analogous to minimum mean-square error (MMSE) decoding, it is the posterior mean itself.

The paper makes six main contributions.  First, it defines a positive locomotion dimension rate $R=d/N$ by encoding a $d$-dimensional body-level command into $N$ local contacts, which has operational significance in locomotion.  Second, it gives a physical realization theorem: equilibrium of a contact-gated quadratic morphology equals the active-set weighted-MMSE estimate, with stiffness equal to posterior precision.  Third, it translates finite-frame erasure design into gait design; a slightly related approach due to Habala et al.\ draws on statistical physics \cite{Habala2026}.  Equal-norm Parseval frames make every leg equally dispensable; low coherence protects against pairs of lost contacts; and spatial Fourier modes yield a mechanically interpretable harmonic gait frame.  Fourth, it proves an analog coding theorem.  With independent contact-survival probability $q$, random Gaussian gait frames recover arbitrary commands reliably for $R<q$, while no frame decoder can do so for $R>q$.  Fifth, it links information to mechanics through an information--compliance identity and an I--MMSE relation \cite{GuoShamaiVerdu2005}.  Sixth, it develops an incremental-redundancy extension, including an exact one-step information rule, a task-MMSE rule, a submodular approximation guarantee, and a zero-error variable-length theorem.  Fig.~\ref{fig:schematic} shows a schematic of how everything works.

The analog theorem concerns real locomotion dimensions per attempted scalar contact, showing that arbitrarily reliable transport can coexist with a finite asymptotic redundancy factor, because the useful locomotion task repertoire grows with the number of contacts.  

\begin{figure*}[t]
  \centering
  \resizebox{\textwidth}{!}{%
    \input{figures/fig1.tikz}%
  }
  \caption{Frame-coded locomotion over rough terrain.  A body-level command is expanded into heterogeneous local contact commands.  Terrain erases or corrupts a subset, yielding a surviving subframe.  Contact-gated elastic elements physically assemble the active information/stiffness matrix and relaxation computes the weighted-MMSE estimate.  The decoded command determines displacement, pose, and arrival-time performance; feedback can request a new contact direction that is informative for the remaining soft mode.}

  \label{fig:schematic}
\end{figure*}

\section{Frame-Coded Locomotion}
Here we introduce the basic approach.

\subsection{Body-Level Commands and Local Contact Coefficients}
Consider one quasi-static gait epoch.  Let
\begin{equation}
U\in\R^d                                                   \label{eq:command}
\end{equation}
be a desired locomotion command.  Its coordinates may be generalized wrench components, body twist components, support-plane pose variables, backbone modes, or coefficients of a short trajectory segment, consistent with the template-and-anchor and reduced-coordinate traditions in locomotion modeling \cite{FullKoditschek1999,HolmesEtAl2006}.  The high-level gait generator knows $U$, but the physical body has to realize it through distributed local legs (after uncertain terrain interaction).

Let
\begin{equation}
F=\begin{bmatrix} f_1&f_2&\cdots&f_N\end{bmatrix}\in\R^{d\times N},\quad N>d, \label{eq:frame}
\end{equation}
be a finite frame \cite{Christensen2016}.  Contact channel $i$ is assigned the coefficient
\begin{equation}
C_i=f_i^{\T}U.                                             \label{eq:analysis}
\end{equation}
Depending on the mechanism, $C_i$ may specify a tendon displacement, rest length, contact impulse, thrust amplitude, phase perturbation, or local body-wave sample.  The analog dimension rate and redundancy factor are
\begin{equation}
R_{\loc}=\frac{d}{N},\quad \eta=\frac{N}{d}=R_{\loc}^{-1}. \label{eq:rate}
\end{equation}
This rate is positive when $d$ and $N$ grow proportionally.  Scalar repetition corresponds to fixed $d=1$ and $N\to\infty$, hence $R_{\loc}\to0$.

To compare different $N$ under fixed nominal effort, we usually impose the Parseval normalization
\begin{equation}
FF^{\T}=I_d.                                               \label{eq:parseval}
\end{equation}
A \emph{uniform} Parseval frame also satisfies
\begin{equation}
\|f_i\|^2=\frac{d}{N},\quad i=1,\ldots,N.                \label{eq:uniform}
\end{equation}
Then every contact carries the same coefficient energy under an isotropic command ensemble.

Drawing on \cite{GoyalKovacevicKelner2001}, we further explicitly include quantization.  A finite-resolution actuator may receive
\begin{equation}
\widetilde C_i=Q_{\Delta_i}(f_i^{\T}U),                    \label{eq:quant}
\end{equation}
where $\Delta_i$ is the local command resolution.  With subtractive dither, or under the usual high-resolution approximation, the quantization error is zero-mean and contributes variance $\Delta_i^2/12$ \cite{GoyalVetterliThao1998,GrayNeuhoff1998}.  It can therefore be incorporated into the forthcoming contact-noise covariance, but also see \cite{RappDG2019}.

\subsection{Terrain as an Erasure-and-Noise Channel}
Let $B_i\in\{0,1\}$ indicate whether contact $i$ becomes load bearing.  The active set is
\begin{equation}
S=\{i:B_i=1\}.                                             \label{eq:active}
\end{equation}
A missing step is an erasure; a brief, slipping, compliant, or otherwise unreliable contact is represented by a small value.  Conditioned on $S$, the surviving local constraints are
\begin{equation}
Y_S=F_S^{\T}U+Z_S,\quad Z_S\sim\cN(0,R_S),                \label{eq:channel}
\end{equation}
where $F_S$ collects the columns indexed by $S$ and we consider Gaussian noise.  For independent disturbances, as in rugged terrain,
\begin{equation}
R_S^{-1}=\diag(\rho_i:i\in S),    \label{eq:precision}
\end{equation}
with $\rho_i$ being the effective contact level.  Duration, normal impulse, friction margin, actuator resolution, and sensor quality can all enter $\rho_i$.  Erasure is equivalently $\rho_i=0$.  Within each prescribed gait epoch, the contact-availability process is assumed exogenous to the command. Conditional on the active set, contact noise is independent of the command and has the stated covariance. Frame matrices and contact-noise parameters are treated as known.

A Gaussian prior
\begin{equation}
U\sim\cN(\mu,P_0),\qquad P_0\succ0,                       \label{eq:prior}
\end{equation}
will be used for the MMSE and information results in the sequel as an ensemble over commanded gait segments.  The posed inference problem will describe what global body state is realized by the corrupted distributed constraints.

\subsection{Locomotion Loss}
Let $x$ denote a body pose or task state and suppose a local locomotion map is
\begin{equation}
x^+=x+L\widehat U+w,                                      \label{eq:taskmap}
\end{equation}
where $\widehat U$ is the physically reconstructed command and $w$ collects unmodeled dynamics.  For a task metric $W\succeq0$, define
\begin{equation}
Q=L^{\T}WL\succeq0.                                       \label{eq:Q}
\end{equation}
The one-epoch locomotion distortion is
\begin{equation}
d_{\loc}(U,\widehat U)=(U-\widehat U)^{\T}Q(U-\widehat U). \label{eq:locloss}
\end{equation}
Appropriate choices of $L$ and $W$ produce displacement, attitude, generalized-wrench, or first-order arrival-time losses.  Over $T$ epochs,
\begin{equation}
\widehat D_T-D_T=\sum_{t=0}^{T-1}L_t(\widehat U_t-U_t)+\sum_{t=0}^{T-1}w_t. \label{eq:destination}
\end{equation}
Thus command-reconstruction covariance becomes destination and estimated-time-of-arrival covariance through ordinary linear propagation.

\begin{definition}[Frame-coded gait]
A frame-coded gait of dimension $d$ and length $N$ consists of an analysis frame $F\in\R^{d\times N}$, a local command map \eqref{eq:analysis}, a terrain law for $(S,Y_S)$, and a physical or computational synthesis map $g_S:Y_S\mapsto\widehat U$.  It is \emph{exactly recoverable} on $S$ when $g_S(F_S^{\T}u)=u$ for every $u\in\R^d$, and \emph{stably recoverable} when the reconstruction operator has uniformly bounded noise amplification.
\end{definition}

\section{Morphological Active-Subframe Decoding}

This section focuses on the embodied computation of decoding in the presence of noise.

\subsection{Quadratic Morphology}
Let $q\in\R^d$ denote the realized body-level state.  Suppose active contact $i$ contributes the elastic energy
\begin{equation}
V_i(q)=\frac{\rho_i}{2}\left(f_i^{\T}q-y_i\right)^2,        \label{eq:contactenergy}
\end{equation}
and the body has intrinsic energy
\begin{equation}
V_0(q)=\frac12(q-\mu)^{\T}P_0^{-1}(q-\mu).                \label{eq:priorenergy}
\end{equation}
For active set $S$,
\begin{equation}
V_S(q)=V_0(q)+\frac12(F_S^{\T}q-y_S)^{\T}R_S^{-1}(F_S^{\T}q-y_S). \label{eq:totalenergy}
\end{equation}
The active set need not be enumerated explicitly as a computation, the body does it naturally.  A slack tendon or unloaded series-elastic element simply removes its energy term.

\begin{theorem}[Physical posterior-mean decoder]\label{thm:physical}
Under \eqref{eq:channel} and \eqref{eq:prior}, $V_S$ is strictly convex and has the unique equilibrium
\begin{equation}
q_S^\star=J_S^{-1}\left(P_0^{-1}\mu+F_SR_S^{-1}y_S\right), \label{eq:posterior_mean}
\end{equation}
where
\begin{equation}
J_S=P_0^{-1}+F_SR_S^{-1}F_S^{\T}.                          \label{eq:stiffness}
\end{equation}
Moreover,
\begin{equation}
q_S^\star=\E[U\mid Y_S=y_S,S],\qquad
P_S\triangleq\Cov(U\mid Y_S,S)=J_S^{-1}.                 \label{eq:compliance}
\end{equation}
Thus the tangent mechanical stiffness is posterior precision and the small-signal compliance is posterior covariance, up to the units used to nondimensionalize energy.
\end{theorem}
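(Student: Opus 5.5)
The plan is a direct computation in two halves: a mechanical half (convexity and equilibrium of $V_S$) and a probabilistic half (Gaussian conditioning). I would then match the two term by term.

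\textbf{Mechanics.} Expanding \eqref{eq:totalenergy} gives a quadratic in $q$. Its Hessian is
\begin{equation*}
\nabla^2 V_S=P_0^{-1}+F_SR_S^{-1}F_S^{\T}=J_S .
\end{equation*}
Since $P_0\succ0$ and $R_S^{-1}\succeq0$ (from \eqref{eq:precision}, with $\rho_i\ge0$), we have $J_S\succeq P_0^{-1}\succ0$. Hence $V_S$ is strictly convex, $J_S$ is invertible, and the equilibrium is unique.

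The stationarity condition is
\begin{equation*}
\nabla V_S(q)=P_0^{-1}(q-\mu)+F_SR_S^{-1}(F_S^{\T}q-y_S)=0 .
\end{equation*}
Rearranged, this reads $J_S q=P_0^{-1}\mu+F_SR_S^{-1}y_S$, which yields \eqref{eq:posterior_mean}. The tangent stiffness, i.e.\ the Jacobian of the restoring force $-\nabla V_S$, is the constant matrix $J_S$. A small external load $\delta\phi$ therefore displaces the equilibrium by $J_S^{-1}\delta\phi$, so $J_S^{-1}$ is the small-signal compliance.

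\textbf{Probability.} Conditioning on $S$ is harmless, because availability is exogenous and $Z_S$ is independent of $U$ given $S$. The prior density \eqref{eq:prior} and the likelihood from \eqref{eq:channel} multiply to
\begin{equation*}
p(u\mid y_S,S)\propto \exp\!\bigl(-V_S(u)\bigr).
\end{equation*}
Here the exponent is exactly $V_S$ evaluated at $q=u$. Completing the square shows that this is a Gaussian density with precision $J_S$ and mean equal to the minimizer $q_S^\star$. Since the mean of a Gaussian equals its mode, $\E[U\mid Y_S=y_S,S]=q_S^\star$ and $P_S=J_S^{-1}$.

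\textbf{Main obstacle.} The only delicate point is the case where some $\rho_i=0$ or $R_S$ is singular, which corresponds to erasures or noiseless contacts. For erasures, the precision form \eqref{eq:precision} handles the issue automatically: those terms simply drop out of both the energy and the likelihood. If $R_S$ is singular, the likelihood becomes a degenerate Gaussian. In that case I would interpret $R_S^{-1}$ as a limit, or restrict the argument to $\rho_i<\infty$. I would also note that the statement tacitly assumes finite precisions, with $J_S$ remaining positive definite because of the prior term.
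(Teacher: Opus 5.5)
Your proposal is correct and follows essentially the same route as the paper's proof: the Hessian $J_S\succ0$ gives strict convexity, the stationarity equation yields \eqref{eq:posterior_mean}, and completing the square in prior times likelihood (which you rightly note equals $\exp(-V_S(u))$ up to normalization) gives posterior precision $J_S$ and posterior mean $q_S^\star$. Your justification of $J_S\succeq P_0^{-1}\succ0$, your small-load derivation of the compliance interpretation, and your remarks on erased ($\rho_i=0$) versus noiseless ($R_S$ singular) contacts are sound; the paper leaves these points implicit.
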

\begin{IEEEproof}
Differentiating \eqref{eq:totalenergy} gives
\begin{equation*}
\nabla V_S(q)=P_0^{-1}(q-\mu)+F_SR_S^{-1}(F_S^{\T}q-y_S).
\end{equation*}
Its Hessian is $J_S\succ0$, so setting the gradient to zero yields \eqref{eq:posterior_mean}.  Completing the square in the Gaussian prior times Gaussian likelihood gives posterior precision $J_S$, posterior covariance $J_S^{-1}$, and posterior mean \eqref{eq:posterior_mean}.
\end{IEEEproof}

\begin{corollary}[Canonical surviving-subframe synthesis]\label{cor:pseudoinverse}
If $P_0^{-1}\to0$, $R_S=\sigma^2I$, and $F_S$ spans $\R^d$, then
\begin{equation}
q_S^\star=(F_SF_S^{\T})^{-1}F_Sy_S=(F_S^{\T})^\dagger y_S. \label{eq:pseudoinverse}
\end{equation}
In the noiseless case, $y_S=F_S^{\T}u$ implies $q_S^\star=u$.
\end{corollary}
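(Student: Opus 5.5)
The corollary is the flat-prior, white-noise specialization of Theorem~\ref{thm:physical}. The limit $P_0^{-1}\to0$ cannot be taken naively in \eqref{eq:posterior_mean}, since $J_S$ would then lose its prior term. So the first step is to set $R_S^{-1}=\sigma^{-2}I$ and rewrite the equilibrium as
\begin{equation*}
q_S^\star=\bigl(P_0^{-1}+\sigma^{-2}F_SF_S^{\T}\bigr)^{-1}\bigl(P_0^{-1}\mu+\sigma^{-2}F_Sy_S\bigr)
=\bigl(\sigma^2P_0^{-1}+F_SF_S^{\T}\bigr)^{-1}\bigl(\sigma^2P_0^{-1}\mu+F_Sy_S\bigr).
\end{equation*}
In this form the noise level enters only through the products $\sigma^2P_0^{-1}$.

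The key observation is that ``$F_S$ spans $\R^d$'' is the same as $\rank F_S=d$. This makes the Gram matrix $G_S=F_SF_S^{\T}$ symmetric positive definite and hence invertible. Matrix inversion is continuous on the open set of invertible matrices, so as $P_0^{-1}\to0$ with $\mu$ fixed (or merely bounded) we get $\sigma^2P_0^{-1}+G_S\to G_S$ and $\sigma^2P_0^{-1}\mu\to0$. Therefore $q_S^\star\to G_S^{-1}F_Sy_S$.

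Physically this is equally clear. With $V_0$ removed, the energy \eqref{eq:totalenergy} becomes $\frac{1}{2\sigma^2}\|F_S^{\T}q-y_S\|^2$. Its Hessian is $\sigma^{-2}G_S\succ0$, so the energy is still strictly convex and its unique minimizer solves the normal equations $G_Sq=F_Sy_S$. This is the least-squares solution, which gives a prior-free derivation as a cross-check.

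Two quick algebraic checks remain.
\begin{itemize}
\item \textbf{Pseudoinverse identity.} Because $F_S^{\T}$ has full column rank, its Moore--Penrose pseudoinverse is $(F_S^{\T})^\dagger=(F_SF_S^{\T})^{-1}F_S$. This follows from the standard formula $A^\dagger=(A^{\T}A)^{-1}A^{\T}$ with $A=F_S^{\T}$, or by verifying the four Penrose conditions directly.
\item \textbf{Noiseless case.} If $y_S=F_S^{\T}u$, then $q_S^\star=G_S^{-1}F_SF_S^{\T}u=G_S^{-1}G_Su=u$, which gives exact recoverability on $S$ in the sense of the definition of a frame-coded gait.
\end{itemize}

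The only delicate point is interpreting the limit. The conditional-mean statement \eqref{eq:compliance} degenerates as $P_0^{-1}\to0$, because the posterior covariance $J_S^{-1}\to\sigma^2G_S^{-1}$ stays finite only thanks to the spanning hypothesis. For that reason I would state the corollary as a limit of the equilibrium formula, relying on continuity of inversion together with the invertibility of $G_S$, rather than as a statement about an improper prior.
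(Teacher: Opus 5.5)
Your proposal is correct and follows essentially the paper's route. The paper gives no separate proof and treats the corollary as the direct specialization of Theorem~\ref{thm:physical} with $R_S=\sigma^2 I$ and the prior term dropped. Then $J_S=\sigma^{-2}F_SF_S^{\T}$ is invertible by the spanning hypothesis, and the pseudoinverse and noiseless identities follow at once. Your rescaling by $\sigma^2$ and appeal to continuity of inversion at the invertible $F_SF_S^{\T}$ make the limit $P_0^{-1}\to0$ rigorous, where the paper substitutes $P_0^{-1}=0$ formally.
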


\begin{lemma}[Rank-one mechanical realization]\label{lem:rankone}
Let a linearized spring or tendon have extension
\begin{equation*}
\ell_i(q)-\ell_{i,0}=f_i^{\T}q-y_i
\end{equation*}
and stiffness $\rho_i$.  Its tangent stiffness contribution is $\rho_i f_if_i^{\T}$ and its generalized force contribution is $\rho_i f_i y_i$.  Consequently, any positive semidefinite matrix of the form $\sum_i\rho_i f_if_i^{\T}$ can be assembled from such elements; unilateral contact makes the sum active-set dependent.
\end{lemma}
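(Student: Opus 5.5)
The plan is to compute the energy of a single element directly, then sum. I would write the potential of element $i$ as $V_i(q)=\frac{\rho_i}{2}(f_i^{\T}q-y_i)^2$, exactly the form \eqref{eq:contactenergy}, using the linearized extension in the statement. Differentiating once gives the generalized elastic force $-\nabla V_i(q)=-\rho_i f_i(f_i^{\T}q-y_i)=-\rho_i f_if_i^{\T}q+\rho_i f_i y_i$. The $q$-independent part, $\rho_i f_i y_i$, is the generalized force contribution. Differentiating again gives the tangent stiffness $\nabla^2V_i=\rho_i f_if_i^{\T}$. This matrix is rank one when $f_i\neq0$ and $\rho_i>0$, and it is positive semidefinite because $x^{\T}\rho_if_if_i^{\T}x=\rho_i(f_i^{\T}x)^2\ge0$.

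For the assembly claim, I would use the fact that energies of elements acting in parallel on the same generalized coordinates add. Hence Hessians and forces add as well, so a bank of elements produces stiffness $\sum_i\rho_if_if_i^{\T}$ and force $\sum_i\rho_if_iy_i$. These match the data term $F_SR_S^{-1}F_S^{\T}$ and $F_SR_S^{-1}y_S$ in Theorem~\ref{thm:physical}.

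To read the claim as covering an arbitrary PSD matrix of the stated form, with $\rho_i\ge0$ allowed, I would note two things. First, a term with $\rho_i=0$ is simply omitted. Second, any PSD matrix $K$ can be put in this form via its spectral decomposition $K=\sum_j\lambda_j v_jv_j^{\T}$: take $\rho_j=\lambda_j\ge0$ and $f_j=v_j$. This makes the realizable class exactly the PSD cone, which in particular contains every finite sum of the stated form.

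For the unilateral remark, I would model a contact or tendon as contributing energy only while it is loaded. Its energy is then $B_iV_i(q)$ with $B_i\in\{0,1\}$ given by the active set \eqref{eq:active}. The total stiffness becomes $\sum_{i\in S}\rho_if_if_i^{\T}$, which depends on the active set.

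The main obstacle I expect is the scope of "linearized". The result is exact only to first order in the geometry: for a genuinely nonlinear $\ell_i(q)$, the Hessian picks up a geometric-stiffness term $(\ell_i-\ell_{i,0})\rho_i\nabla^2\ell_i$. I would state that this term is dropped by the linearization hypothesis, or vanishes at zero preload, so the rank-one form holds for the linearized model as posed.
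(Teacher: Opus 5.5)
Your proposal is correct and follows the paper's route: write the element energy $\frac{\rho_i}{2}(f_i^{\T}q-y_i)^2$, differentiate once and twice, add energies (and hence forces and Hessians) across elements, and gate each term by the active set. Your labeling is the right one: the first derivative $\rho_i f_i(f_i^{\T}q-y_i)$ is the generalized force, whose $q$-independent part gives $\rho_i f_i y_i$, and the second derivative $\rho_i f_if_i^{\T}$ is the stiffness. The paper's written proof swaps the words ``stiffness'' and ``load vector,'' and your spectral-decomposition and geometric-stiffness remarks are correct additions that the statement as posed does not require.
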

\begin{IEEEproof}
The element energy is \eqref{eq:contactenergy}.  One differentiation gives stiffness $\rho_i f_i(f_i^{\T}q-y_i)$; a second gives the generalized load vector $\rho_i f_if_i^{\T}$.  Energies and Hessians add.
\end{IEEEproof}

The lemma specifies a morphology rather than only an estimator.  Tendon moment arms or attachment geometry encode $f_i$; series elasticity encodes $\rho_i$; and slackness or contact complementarity gates the term.  Elastic networks have already been shown capable of implementing trainable input--output responses through equilibrium physics \cite{SternEtAl2021,AltmanEtAl2024}.  Here the desired response is fixed analytically: the network solves the active-subframe normal equations.

\subsection{Physical Decoding Time and Task Error}
Let the relaxation dynamics be overdamped,
\begin{equation}
C\dot q=-\nabla V_S(q),\qquad C\succ0.                     \label{eq:overdamped}
\end{equation}

\begin{proposition}[Settling rate]\label{prop:settling}
For fixed $S$, let $e=q-q_S^\star$.  Then
\begin{equation}
\|e(t)\|_C\le e^{-\alpha_St}\|e(0)\|_C,\quad
\alpha_S=\lambda_{\min}(C^{-1/2}J_SC^{-1/2}),             \label{eq:settling}
\end{equation}
where $\|v\|_C^2=v^{\T}Cv$.  Loss of the surviving lower frame bound therefore causes large estimation variance, mechanical softness, and slow decoding simultaneously.
\end{proposition}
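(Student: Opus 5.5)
Since $V_S$ is quadratic with Hessian $J_S$ (Theorem~\ref{thm:physical}), its gradient is affine and vanishes at $q_S^\star$:
\begin{equation*}
\nabla V_S(q)=J_S(q-q_S^\star).
\end{equation*}
Substituting into \eqref{eq:overdamped} gives the linear error dynamics $C\dot e=-J_Se$, that is, $\dot e=-C^{-1}J_Se$. I would therefore reduce the claim to a Lyapunov estimate for this linear system, measured in the $C$-weighted norm.

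The key step is to differentiate $\|e\|_C^2=e^{\T}Ce$ along trajectories. Using the symmetry of $C$,
\begin{equation*}
\frac{d}{dt}\,e^{\T}Ce=2e^{\T}C\dot e=-2e^{\T}J_Se.
\end{equation*}
Next, write $v=C^{1/2}e$, which is well defined because $C\succ0$. Then $e^{\T}J_Se=v^{\T}(C^{-1/2}J_SC^{-1/2})v$ and $\|v\|^2=\|e\|_C^2$. The Rayleigh quotient bound for the symmetric positive definite matrix $C^{-1/2}J_SC^{-1/2}$ gives
\begin{equation*}
e^{\T}J_Se\ge\alpha_S\|e\|_C^2,
\end{equation*}
and hence $\frac{d}{dt}\|e\|_C^2\le-2\alpha_S\|e\|_C^2$. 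Grönwall's inequality then yields $\|e(t)\|_C^2\le e^{-2\alpha_St}\|e(0)\|_C^2$, and taking square roots gives \eqref{eq:settling}. As a cross-check, one can diagonalize: $\dot v=-(C^{-1/2}J_SC^{-1/2})v$ is a symmetric linear system, so each eigencomponent decays at its own eigenvalue rate and the slowest decays at rate $\alpha_S$. This shows the bound is tight along the softest mode.

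For the interpretive sentence, I would use the congruence bounds
\begin{equation*}
\lambda_{\min}(J_S)/\lambda_{\max}(C)\le\alpha_S\le\lambda_{\min}(J_S)/\lambda_{\min}(C).
\end{equation*}
With $\lambda_{\min}(J_S)=\lambda_{\min}(P_0^{-1}+F_SR_S^{-1}F_S^{\T})$, the lower frame bound of the weighted surviving subframe controls all three quantities at once. As that bound degenerates (so that $\lambda_{\min}(J_S)$ reduces to roughly the weak prior term), the largest eigenvalue of $P_S=J_S^{-1}$ grows, which means large posterior variance. The mechanical stiffness $J_S$ has a small eigenvalue, which means a soft mode. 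And $\alpha_S$ shrinks, which means slow settling.

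No step is a genuine obstacle. The only care needed is to use the $C$-norm rather than the Euclidean norm: $C^{-1}J_S$ is not symmetric in general, so a Euclidean energy estimate would not give the clean rate. Working with $v=C^{1/2}e$ symmetrizes the system and resolves this.
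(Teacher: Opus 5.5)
Your proof is correct and follows the paper's route. The paper likewise reduces \eqref{eq:overdamped} to $C\dot e=-J_Se$, sets $z=C^{1/2}e$ to get the symmetric system $\dot z=-C^{-1/2}J_SC^{-1/2}z$, and concludes by spectral decomposition, which is your cross-check; your Lyapunov--Gr\"onwall estimate on $e^{\T}Ce$ is the energy form of that same symmetrization. Your congruence bounds on $\alpha_S$ in terms of $\lambda_{\min}(J_S)$ and the spectrum of $C$ are a useful addition, since the paper gives no separate justification for the interpretive sentence.
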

\begin{IEEEproof}
Equation \eqref{eq:overdamped} gives $C\dot e=-J_Se$.  With $z=C^{1/2}e$, $\dot z=-C^{-1/2}J_SC^{-1/2}z$.  Spectral decomposition yields \eqref{eq:settling}.
\end{IEEEproof}

\begin{proposition}[Locomotion MMSE]\label{prop:locmmse}
For the posterior-mean decoder and task metric $Q\succeq0$,
\begin{equation}
D_{\loc}(S)\triangleq\E[d_{\loc}(U,q_S^\star)\mid S]
=\tr(QP_S).                                                \label{eq:taskmse}
\end{equation}
For conditionally independent epoch errors in \eqref{eq:destination},
\begin{equation}
\Cov(\widehat D_T-D_T\mid S^T)
=\sum_{t=0}^{T-1}L_tP_{S_t}L_t^{\T}+\Sigma_w.             \label{eq:terminalcov}
\end{equation}
\end{proposition}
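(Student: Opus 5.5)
The plan is to reduce both claims to Theorem~\ref{thm:physical}: conditional on $S$, the error $U-q_S^\star$ has zero conditional mean and conditional covariance $P_S=J_S^{-1}$. Two steps are needed, a trace identity for a quadratic form and linear propagation of uncorrelated errors through \eqref{eq:destination}.

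For \eqref{eq:taskmse}, I first condition additionally on $Y_S$. Theorem~\ref{thm:physical} gives $q_S^\star=\E[U\mid Y_S,S]$ and $\Cov(U\mid Y_S,S)=J_S^{-1}$. Set $E=U-q_S^\star$. Then
\begin{equation*}
\E[E^{\T}QE\mid Y_S,S]=\tr\!\big(Q\,\E[EE^{\T}\mid Y_S,S]\big)=\tr(QJ_S^{-1}).
\end{equation*}
This uses the cyclic property of the trace and linearity of conditional expectation. Because $J_S$ does not depend on $y_S$, the value is the same constant for every $y_S$. The tower property over $Y_S$ given $S$ then yields $D_{\loc}(S)=\tr(QP_S)$. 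The distortion is $d_{\loc}(U,q_S^\star)$ from \eqref{eq:locloss}, and $Q\succeq0$ is not even needed for this identity.

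For \eqref{eq:terminalcov}, I write $\widehat U_t-U_t=-E_t$ with $E_t=U_t-q_{S_t}^\star$. By the first part, conditionally on $S_t$ each $E_t$ has mean zero (by the tower property, since $\E[E_t\mid Y_{S_t},S_t]=0$) and unconditional-in-$Y$ covariance $P_{S_t}$. I interpret ``conditionally independent epoch errors'' as three conditions given $S^T$. The $E_t$ are mutually uncorrelated across $t$. The disturbance $w=(w_t)$ is uncorrelated with every $E_t$. The sum $\sum_t w_t$ has covariance $\Sigma_w$, which I take as the definition of $\Sigma_w$. The covariance of the sum in \eqref{eq:destination} then expands bilinearly. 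Cross terms vanish, and each epoch contributes $L_t\Cov(E_t\mid S^T)L_t^{\T}=L_tP_{S_t}L_t^{\T}$; the sign flip is irrelevant in a covariance.

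One technical point needs checking: conditioning on the whole sequence $S^T$ rather than on $S_t$ alone must not change the law of $E_t$. This holds because the contact process is exogenous to the commands, and because within-epoch noise is independent given the active set, so $\Cov(E_t\mid S^T)=\Cov(E_t\mid S_t)=P_{S_t}$.

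The main obstacle is stating precisely the independence assumption behind the second identity. Epoch errors are typically correlated when the prior mean $\mu_t$ of one epoch depends on the previous estimate. I would therefore state that \eqref{eq:terminalcov} requires the $E_t$ to be uncorrelated given $S^T$ and uncorrelated with $w$. I would also note that, under the Gaussian model, a posterior-mean error is uncorrelated with any function of past observations, which makes these assumptions natural.
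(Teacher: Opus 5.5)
Your proposal is correct and follows the paper's own argument. You condition on $(Y_S,S)$, use Theorem~\ref{thm:physical} to identify $q_S^\star$ and $P_S$ as posterior mean and covariance, apply $\E[e^{\T}Qe]=\tr(Q\,\E[ee^{\T}])$ with the tower property, and expand the covariance of \eqref{eq:destination} bilinearly; your explicit uncorrelatedness hypotheses and $\Sigma_w=\Cov(\sum_t w_t\mid S^T)$ only make precise what the paper leaves implicit. One caution: your closing remark does not justify cross-epoch uncorrelatedness, since orthogonality of $E_t$ to functions of the observations does not control its correlation with $E_s=U_s-\widehat U_s$, which involves the unobserved $U_s$, so that condition must genuinely be assumed (e.g., commands and contact noise independent across epochs).
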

\begin{IEEEproof}
Conditioned on the data, $q_S^\star$ is the posterior mean and the error covariance is $P_S$.  The identity $\E[e^{\T}Qe]=\tr(Q\E[ee^{\T}])$ gives \eqref{eq:taskmse}; covariance propagation gives \eqref{eq:terminalcov}.
\end{IEEEproof}

If $a^{\T}U$ is the forward-speed mode and one epoch lasts $\Delta$, then a one-dimensional destination variance is $\Delta^2a^{\T}P_Sa$.  Under a Gaussian approximation,
\begin{equation}
\Pp\{|\widehat D-D|>\epsilon\mid S\}=2Q_{\rm G}\!\left(
\frac{\epsilon}{\Delta\sqrt{a^{\T}P_Sa}}\right),          \label{eq:tail}
\end{equation}
where $Q_{\rm G}$ is the Marcum $Q$-function.  Thus frame conditioning captures the punctual-transport metric used in matter-transport experiments.

\section{Finite-Erasure Gait Design}
For this section, first take an uninformative prior and white residual noise:
\begin{equation}
P_0^{-1}=0,\quad R_S=\sigma^2I.
\label{eq:white}
\end{equation}
For a surviving subframe, define frame bounds
\begin{equation}
A_S=\lambda_{\min}(F_SF_S^{\T}),\quad
B_S=\lambda_{\max}(F_SF_S^{\T}).  \label{eq:framebounds}
\end{equation}
Exact reconstruction requires $A_S>0$; stable reconstruction requires $A_S$ bounded away from zero.

\begin{proposition}[Frame bound to locomotion bound]\label{prop:framebound}
If $A_S>0$, then
\begin{equation}
D_{\loc}(S)=\sigma^2\tr\!\left[Q(F_SF_S^{\T})^{-1}\right]
\le \frac{\sigma^2\tr Q}{A_S}.                           \label{eq:frameerror}
\end{equation}
For isotropic damping $C=cI$, the decoding time constant also satisfies $\tau_S\le c/A_S$.
\end{proposition}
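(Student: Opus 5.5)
The plan is to pass to the white, uninformative-prior limit \eqref{eq:white} in Proposition~\ref{prop:locmmse}, and then bound the resulting trace by a spectral inequality.

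\textbf{Identifying $P_S$.} When $A_S>0$, the matrix $F_SF_S^{\T}$ is invertible. With $P_0^{-1}=0$ and $R_S^{-1}=\sigma^{-2}I$, \eqref{eq:stiffness} gives
\begin{equation*}
J_S=\sigma^{-2}F_SF_S^{\T},\qquad P_S=\sigma^2(F_SF_S^{\T})^{-1}.
\end{equation*}
The improper prior needs a short justification. One option is to take $P_0^{-1}=\epsilon I$ and let $\epsilon\to0$. Because $F_SF_S^{\T}\succ0$, the inverse $(\epsilon I+\sigma^{-2}F_SF_S^{\T})^{-1}$ converges to $\sigma^2(F_SF_S^{\T})^{-1}$. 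A more direct option uses Corollary~\ref{cor:pseudoinverse}. The decoder is $q_S^\star=(F_S^{\T})^\dagger y_S$, so the error is $(F_SF_S^{\T})^{-1}F_SZ_S$, with covariance $\sigma^2(F_SF_S^{\T})^{-1}F_SF_S^{\T}(F_SF_S^{\T})^{-1}=\sigma^2(F_SF_S^{\T})^{-1}$. I would use this direct computation, since it avoids any limiting argument. Substituting into \eqref{eq:taskmse} then yields the equality in \eqref{eq:frameerror}.

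\textbf{The trace bound.} Write $M=(F_SF_S^{\T})^{-1}$. It satisfies $0\preceq M\preceq A_S^{-1}I$. Since $Q\succeq0$, we have
\begin{equation*}
\tr(QM)=\tr(Q^{1/2}MQ^{1/2})\le A_S^{-1}\tr(Q).
\end{equation*}
This holds because $Q^{1/2}MQ^{1/2}\preceq A_S^{-1}Q$ and the trace is monotone on the positive semidefinite cone.

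\textbf{Time constant.} Take $\tau_S=1/\alpha_S$, the rate from Proposition~\ref{prop:settling}. With $C=cI$,
\begin{equation*}
\alpha_S=\lambda_{\min}(J_S)/c=\frac{A_S}{\sigma^2c}.
\end{equation*}
This gives $\tau_S=\sigma^2c/A_S$. The stated form $\tau_S\le c/A_S$ therefore requires the energy to be nondimensionalized so that $\sigma^2=1$, or $\sigma^2\le1$. Alternatively, the contact stiffnesses $\rho_i$ may be taken as unit-normalized, as the theorem's phrase ``up to units'' suggests.

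\textbf{Main obstacle.} The hardest step is this bookkeeping of normalization between $\rho_i$ and $\sigma^{-2}$. The trace and eigenvalue steps are routine. I would state the convention explicitly, namely unit stiffness per active contact. Under that convention the mechanical Hessian is exactly $F_SF_S^{\T}$, and the bound follows with equality: $\tau_S=c/A_S$.
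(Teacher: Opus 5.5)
Your proposal is correct and follows essentially the same route as the paper's appendix proof. Both compute the least-squares error covariance $\sigma^2(F_SF_S^{\T})^{-1}$ directly, invoke Proposition~\ref{prop:locmmse} for the equality, bound the trace via $(F_SF_S^{\T})^{-1}\preceq A_S^{-1}I$, and apply Proposition~\ref{prop:settling} with $J_S=\sigma^{-2}F_SF_S^{\T}$; your explicit treatment of the factor $\sigma^2$ in $\tau_S=\sigma^2c/A_S$ is exactly what the paper means by absorbing the common stiffness scale into $c$.
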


The formula is the Goyal-Kova\v{c}evi\'c-Kelner reconstruction-error criterion expressed in locomotion coordinates; equal-norm tight-frame and optimal-frame refinements were developed further in \cite{GoyalKovacevicKelner2001,CasazzaKovacevic2003,HolmesPaulsen2004}.  A poor lower frame bound means a body mode that is simultaneously weakly constrained, noise amplifying, and slow to settle.

\subsection{One Missing Contact}
\begin{theorem}[One-erasure-optimal gait frames]\label{thm:oneerasure}
Let $F\in\R^{d\times N}$ satisfy $FF^{\T}=I_d$ with $N>d$.  After erasing contact $i$,
\begin{equation}
F_{-i}F_{-i}^{\T}=I_d-f_if_i^{\T},\quad
A_{-i}=1-\|f_i\|^2.               \label{eq:oneerase}
\end{equation}
Consequently,
\begin{equation}
\max_{F:FF^{\T}=I_d}\ \min_i A_{-i}=1-\tfrac{d}{N},       \label{eq:oneoptimal}
\end{equation}
with equality if and only if $F$ is equal norm.
\end{theorem}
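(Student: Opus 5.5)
The proof has three parts: the erasure identity, a reduction to norms, and a trace argument.

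\emph{Erasure identity.} Because $FF^{\T}=\sum_{j=1}^N f_jf_j^{\T}=I_d$, removing column $i$ gives
\begin{equation*}
F_{-i}F_{-i}^{\T}=\sum_{j\ne i}f_jf_j^{\T}=I_d-f_if_i^{\T}.
\end{equation*}
The rank-one update $I_d-f_if_i^{\T}$ has eigenvalue $1-\|f_i\|^2$ in the direction $f_i/\|f_i\|$ (if $f_i\neq0$). It has eigenvalue $1$ on the orthogonal complement, which is nonempty since $d\ge2$. For $d=1$, the single eigenvalue is $1-\|f_i\|^2$ directly. In every case, $1-\|f_i\|^2\le1$, so $\lambda_{\min}=1-\|f_i\|^2$, which proves \eqref{eq:oneoptimal}'s precursor \eqref{eq:oneerase}. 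The degenerate case $f_i=0$ also gives $A_{-i}=1$, consistent with the formula.

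\emph{Reduction to norms.} By \eqref{eq:oneerase},
\begin{equation*}
\min_i A_{-i}=1-\max_i\|f_i\|^2.
\end{equation*}
The outer maximization over Parseval frames is therefore equivalent to minimizing $\max_i\|f_i\|^2$.

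\emph{Trace argument.} The key constraint is
\begin{equation*}
\sum_i\|f_i\|^2=\tr(F^{\T}F)=\tr(FF^{\T})=d.
\end{equation*}
Since a maximum is at least the average, $\max_i\|f_i\|^2\ge d/N$. Equality holds if and only if every $\|f_i\|^2=d/N$, i.e., $F$ is equal norm as in \eqref{eq:uniform}. This gives the bound $\min_iA_{-i}\le1-d/N$ for every Parseval frame, with equality exactly for equal-norm Parseval frames.

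\emph{Main obstacle: attainment.} The remaining step is to show the supremum is actually achieved, i.e., that equal-norm Parseval frames exist for every $N>d$. I would give an explicit construction rather than cite one:
\begin{itemize}
\item If $d$ is even, use the real harmonic frame. Its columns are $\sqrt{2/N}\,(\cos(2\pi k j/N),\sin(2\pi kj/N))_{k=1}^{d/2}$ over $j=1,\dots,N$. Orthogonality of the discrete Fourier rows gives $FF^{\T}=I_d$, and each column has squared norm $(d/2)(2/N)=d/N$.
\item If $d$ is odd, add the constant row $N^{-1/2}\mathbf 1^{\T}$, which is orthogonal to the other rows and has the correct length. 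The harmonic frequencies must avoid $k\equiv0$ and $k\equiv N/2 \pmod N$ and must not coincide mod $\pm$. This requires checking that $N>d$ allows enough frequencies, which it does.
\end{itemize}
Alternatively, one can cite the existence of equal-norm tight frames for all $N\ge d$ \cite{CasazzaKovacevic2003,HolmesPaulsen2004}. With existence in hand, the maximum equals $1-d/N$, proving \eqref{eq:oneoptimal}.
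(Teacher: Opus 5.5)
Your proof is correct and follows the paper's argument: the rank-one eigenvalue computation for $I_d-f_if_i^{\T}$, the identity $\sum_i\|f_i\|^2=\tr(FF^{\T})=d$, and the observation that the maximum squared norm is at least the average $d/N$, with equality exactly for equal-norm frames. Your explicit attainment step, which constructs harmonic equal-norm Parseval frames for every $N>d$ and both parities of $d$, is a useful addition. The paper's proof leaves existence implicit, and its later harmonic-frame proposition covers only odd $d$.
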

\begin{IEEEproof}
The nonzero eigenvalue of $f_if_i^{\T}$ is $\|f_i\|^2$, giving \eqref{eq:oneerase}.  Since $\sum_i\|f_i\|^2=\tr(FF^{\T})=d$, $\max_i\|f_i\|^2\ge d/N$.  Hence $\min_iA_{-i}=1-\max_i\|f_i\|^2\le1-d/N$, with equality exactly when all norms equal $d/N$.
\end{IEEEproof}

The legged locomotion interpretation is direct: under fixed total nominal effort, every leg should be equally dispensable.  A gait in which one contact carries disproportionate energy in a body mode is fragile even if the total number of legs is large.

\subsection{Two and Multiple Missing Contacts}
Define the coherence
\begin{equation}
\mu(F)=\max_{i\ne j}|f_i^{\T}f_j|.                         \label{eq:coherence}
\end{equation}

\begin{theorem}[Two-erasure robustness]\label{thm:twoerasure}
Let $F$ be an equal-norm Parseval frame and set $a=d/N$.  If contacts $i$ and $j$ are erased, then
\begin{equation}
A_{-\{i,j\}}=1-a-|f_i^{\T}f_j|.                           \label{eq:twoerase}
\end{equation}
Hence minimax two-erasure design within this class is equivalent to minimizing coherence.  Whenever an equiangular tight frame exists, it is two-erasure optimal, in accord with Grassmannian frame design \cite{StrohmerHeath2003}.
\end{theorem}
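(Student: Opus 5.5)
Following the proof of Theorem~\ref{thm:oneerasure}, I would start from the Parseval identity $FF^{\T}=I_d$ and write the surviving frame operator after erasing contacts $i$ and $j$ as
\begin{equation*}
F_{-\{i,j\}}F_{-\{i,j\}}^{\T}=I_d-f_if_i^{\T}-f_jf_j^{\T}=I_d-G,\qquad G=\bmat{f_i & f_j}\bmat{f_i & f_j}^{\T}.
\end{equation*}
Then $A_{-\{i,j\}}=1-\lambda_{\max}(G)$, so the problem reduces to finding the largest eigenvalue of a rank-at-most-two positive semidefinite matrix.

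The key step is to trade $G$ for its $2\times2$ Gram counterpart. Since $MM^{\T}$ and $M^{\T}M$ share nonzero eigenvalues, with $M=\bmat{f_i&f_j}$ the relevant matrix is
\begin{equation*}
M^{\T}M=\bmat{a & f_i^{\T}f_j\\ f_i^{\T}f_j & a},
\end{equation*}
using the equal-norm property $\|f_i\|^2=\|f_j\|^2=a=d/N$. Its eigenvalues are $a\pm|f_i^{\T}f_j|$. Hence $\lambda_{\max}(G)=a+|f_i^{\T}f_j|$, which gives \eqref{eq:twoerase} directly. If $d=1$, $G$ still has this top eigenvalue, so no case split is needed. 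Taking the minimum over pairs gives $\min_{i\ne j}A_{-\{i,j\}}=1-a-\mu(F)$. Because $a$ is fixed within the class of equal-norm Parseval frames of size $d\times N$, maximizing the worst-case lower bound is exactly the same as minimizing $\mu(F)$.

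For the equiangular tight frame claim, I would invoke the Welch bound. For unit vectors $g_k=f_k/\sqrt a$, it states $\max_{i\ne j}|g_i^{\T}g_j|\ge\sqrt{(N-d)/(d(N-1))}$. The bound follows from $\sum_{k,l}(g_k^{\T}g_l)^2=\tr((\sum_k g_kg_k^{\T})^2)\ge N^2/d$ after separating the diagonal terms. Equality holds if and only if the frame is tight and all off-diagonal $|g_i^{\T}g_j|$ are equal, that is, an ETF. A rescaled ETF with norms $a$ is an equal-norm Parseval frame and attains the minimum coherence, so it is two-erasure optimal.

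The main obstacle is only this last step. I must make sure the Welch bound equality condition applies within the equal-norm Parseval class, so that a rescaled ETF is actually a member of that class. Tightness with $\tr(FF^{\T})=d$ forces $FF^{\T}=I_d$, which confirms this. The eigenvalue computation itself is routine.
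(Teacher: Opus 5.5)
Your proposal is correct and follows the paper's proof: you write $A_{-\{i,j\}}=1-\lambda_{\max}(f_if_i^{\T}+f_jf_j^{\T})$, read off the nonzero eigenvalues $a\pm|f_i^{\T}f_j|$ from the $2\times2$ Gram matrix, and take the worst pair to get $1-a-\mu(F)$. Your Welch-bound argument for the equiangular-tight-frame claim, including the check that a rescaled ETF is Parseval, is a correct and self-contained addition; the paper does not prove that step and only attributes it to Grassmannian frame design.
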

\begin{IEEEproof}
The  eigenvalues of the two-column Gram matrix $f_if_i^{\T}+f_jf_j^{\T}$ are $a\pm|f_i^{\T}f_j|$.  Subtracting this matrix from $I_d$ gives \eqref{eq:twoerase}.  The worst pair is the pair attaining $\mu(F)$.
\end{IEEEproof}

\begin{corollary}[Multiple-erasure coherence bound]\label{cor:multierasure}
If an erasure set $E$ has size $r$, then
\begin{equation}
A_{-E}\ge1-a-(r-1)\mu(F).         \label{eq:multi}
\end{equation}
\end{corollary}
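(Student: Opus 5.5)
Since $F$ is Parseval, erasing the set $E$ leaves
\begin{equation*}
F_{-E}F_{-E}^{\T}=I_d-F_EF_E^{\T}.
\end{equation*}
This is the same subtraction used in Theorem~\ref{thm:oneerasure} and Theorem~\ref{thm:twoerasure}. Hence
\begin{equation*}
A_{-E}=1-\lambda_{\max}(F_EF_E^{\T}).
\end{equation*}
The nonzero eigenvalues of $F_EF_E^{\T}$ are those of the $r\times r$ Gram matrix $G_E=F_E^{\T}F_E$, whose entries are $(G_E)_{ij}=f_i^{\T}f_j$. The corollary therefore reduces to an upper bound on $\lambda_{\max}(G_E)$.

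For equal-norm Parseval frames, as in Theorem~\ref{thm:twoerasure}, every diagonal entry of $G_E$ equals $a=d/N$. Every off-diagonal entry has magnitude at most $\mu(F)$ by \eqref{eq:coherence}. We then apply the Gershgorin circle theorem, or equivalently bound the spectral radius by the maximum absolute row sum. Each row contributes $a$ from the diagonal plus at most $(r-1)\mu(F)$ from the off-diagonal terms, so
\begin{equation*}
\lambda_{\max}(G_E)\le a+(r-1)\mu(F).
\end{equation*}
Substituting this into the expression for $A_{-E}$ gives \eqref{eq:multi}.

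The main obstacle is a hypothesis issue, not a computational one. The corollary as stated does not repeat the equal-norm Parseval assumption, but it inherits $a=d/N$ from Theorem~\ref{thm:twoerasure}, so I will state that assumption explicitly. Without equal norms, the same Gershgorin argument still works with $a$ replaced by $\max_{i\in E}\|f_i\|^2$. I will also check consistency with the earlier cases: $r=1$ recovers \eqref{eq:oneerase}, and $r=2$ recovers \eqref{eq:twoerase} with equality. Finally, I will note that the bound is informative only when $(r-1)\mu(F)<1-a$, which is a sufficient condition for exact recovery after any $r$ erasures.
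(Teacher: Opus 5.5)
Your proposal is correct and follows the paper's proof essentially verbatim. Both use the Parseval subtraction to get $A_{-E}=1-\lambda_{\max}(F_EF_E^{\T})$, pass to the $r\times r$ Gram matrix, and apply Gershgorin with diagonal $a$ and off-diagonal entries bounded by $\mu(F)$. You are also right to make the equal-norm Parseval hypothesis explicit. One small correction to your consistency check: for $r=2$ the bound with $\mu(F)$ holds with equality only for a pair attaining the coherence, and it is tight for every pair only if you keep $|f_i^{\T}f_j|$ in place of $\mu(F)$.
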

\begin{IEEEproof}
The nonzero eigenvalues of $\sum_{i\in E}f_if_i^{\T}$ equal those of its $r\times r$ Gram matrix.  Gershgorin's circle theorem bounds its largest eigenvalue by $a+(r-1)\mu(F)$.
\end{IEEEproof}

Low coherence means that two legs do not encode nearly the same mixture of body modes.  The statement differs from repetition: repeated contacts are maximally coherent, so they can suppress scalar noise while remaining poor at preserving a multidimensional task.

\subsection{A Harmonic Gait Frame}
The myriapod gait construction due to Chong et al.\ uses spatially-shifted leg and body waves \cite{ChongEtAl2022Gait,ChongEtAl2023}.  This suggests a frame whose coordinates are spatial Fourier modes.

\begin{proposition}[Uniform harmonic gait frame]\label{prop:harmonic}
Let $d=2K+1$, $N\ge2K+1$, and $\theta_i=2\pi i/N$, $i=0,\ldots,N-1$.  Define
\begin{equation}
 f_i=\frac{1}{\sqrt N}\bmat{1\\
 \sqrt2\cos\theta_i\\ \sqrt2\sin\theta_i\\ \vdots\\
 \sqrt2\cos K\theta_i\\ \sqrt2\sin K\theta_i}.        \label{eq:harmonic}
\end{equation}
Then $F=[f_0,\ldots,f_{N-1}]$ is an equal-norm Parseval frame:
\begin{equation}
FF^{\T}=I_d,\quad \|f_i\|^2=d/N.  \label{eq:harmonicparseval}
\end{equation}
\end{proposition}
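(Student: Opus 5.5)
The plan is to compute $FF^{\T}=\sum_{i=0}^{N-1}f_if_i^{\T}$ entrywise. Each entry is $\frac{1}{N}$ times a sum over $i$ of a product of two of the functions
\[
1,\ \sqrt2\cos(k\theta_i),\ \sqrt2\sin(k\theta_i),\qquad 1\le k\le K.
\]
So the claim $FF^{\T}=I_d$ reduces to discrete orthogonality of these sampled trigonometric functions on the $N$-point grid $\theta_i=2\pi i/N$. Equivalently, the $d$ rows of $F$ must be orthonormal vectors in $\R^N$.

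The main tool will be the geometric-sum identity
\[
\sum_{i=0}^{N-1}e^{\mathrm{i}m\theta_i}=N\,\mathbf 1[m\equiv0 \bmod N].
\]
Taking real and imaginary parts gives $\sum_i\cos(m\theta_i)=N\mathbf 1[N\mid m]$ and $\sum_i\sin(m\theta_i)=0$. I will expand each product with the product-to-sum formulas:
\[
\cos a\cos b=\tfrac12[\cos(a-b)+\cos(a+b)],\quad
\sin a\sin b=\tfrac12[\cos(a-b)-\cos(a+b)],\quad
\sin a\cos b=\tfrac12[\sin(a+b)+\sin(a-b)].
\]
Then I check the cases in order. The constant-constant entry gives $N/N=1$. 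A constant paired with a mode $k\in\{1,\dots,K\}$ gives frequency $k$ with $0<k<N$, so the sum vanishes. Two distinct modes $k\neq l$ produce frequencies $k\pm l$ with $0<|k-l|$ and $k+l\le 2K$. Every sine-cosine product vanishes because all sine sums are zero. For the diagonal entries of mode $k$, the sums are $\frac{2}{N}\cdot\frac12[N\pm\sum_i\cos(2k\theta_i)]$, which equals $1$ provided $N\nmid 2k$.

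The main obstacle is the aliasing condition. Every nonzero frequency that arises satisfies $0<m\le 2K$, and we need such $m$ never to be a multiple of $N$, i.e.\ $2K<N$, or $N\ge 2K+1=d$. That is exactly the hypothesis, and I will state this carefully since it is the only place where the hypothesis is used. For instance, if $N=2K$ then $\sin(K\theta_i)\equiv0$ and the frame fails.

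Finally, the equal-norm claim follows from $\cos^2+\sin^2=1$ applied to each entry of $f_i$:
\[
\|f_i\|^2=\frac1N\Bigl(1+\sum_{k=1}^K2(\cos^2k\theta_i+\sin^2k\theta_i)\Bigr)=\frac{1+2K}{N}=\frac dN.
\]
This is consistent with $\tr(FF^{\T})=d$, and by Theorem~\ref{thm:oneerasure} it makes the harmonic frame one-erasure optimal.
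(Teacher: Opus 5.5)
Your proposal is correct and follows the same route as the paper. You show the rows of $F$ are orthonormal by discrete Fourier orthogonality on the $N$-point grid, with $N>2K$ ruling out aliasing of the frequencies $k\pm l$ and $2k$, and you get equal column norms from $\cos^2+\sin^2=1$; your case-by-case check just spells out what the paper's one-line proof leaves implicit.
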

\begin{IEEEproof}
Discrete Fourier orthogonality makes distinct rows of $F$ orthogonal and gives unit row norm when $N>2K$.  Also $\cos^2(k\theta_i)+\sin^2(k\theta_i)=1$, so each column has squared norm $(1+2K)/N=d/N$.
\end{IEEEproof}

Here $U$ specifies amplitudes of global spatial modes, while $C_i=f_i^{\T}U$ is the command at module $i$.  Contact loss deletes a spatial sample.  The morphology reconstructs the global mode coefficients from the remaining samples, rather than averaging identical thrusts.  Fig.~\ref{fig:finite} illustrates the resulting robustness to finite contact erasures.

\begin{figure}
\centering
\includegraphics[width=\columnwidth]{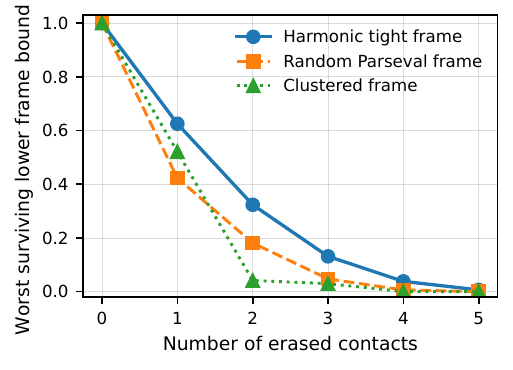}
\caption{Worst surviving lower frame bound for representative $d=3$, $N=8$ Parseval designs.  The equal-norm harmonic construction equalizes single-contact loss and remains comparatively robust to several erasures; clustered columns create redundant contact roles and a weak surviving mode.}
\label{fig:finite}
\end{figure}


\section{A Random-Frame Coding Theorem}
We now let the useful locomotion dimension grow with the number of attempted contacts.  This distinguishes coded locomotion  from scalar repetition \cite{ChongEtAl2023} and connects with probabilistic erasure-robust frame analysis \cite{Vershynin2005}.

For each $N$, let $d_N/N\to R\in(0,1)$ and draw
\begin{equation}
F_N=\frac1{\sqrt N}G_N,\quad (G_N)_{ki}\stackrel{\rm iid}{\sim}\cN(0,1). \label{eq:gaussframe}
\end{equation}
Let contacts survive independently:
\begin{equation}
B_i\stackrel{\rm iid}{\sim}\operatorname{Bernoulli}(q),\quad
M_N=|S_N|.                        \label{eq:bernoulli}
\end{equation}
Define frame-decoding failure in the noiseless model as noninjectivity of $u\mapsto F_{S_N}^{\T}u$.

\begin{theorem}[Random-frame achievability and converse]\label{thm:randomcoding}
Under \eqref{eq:gaussframe}--\eqref{eq:bernoulli}:
\begin{enumerate}[label=(\roman*),leftmargin=4mm]
\item Conditional on $M_N\ge d_N$, $F_{S_N}$ has rank $d_N$ almost surely.  Hence
\begin{equation}
P_{\fail}^{(N)}=\Pp\{M_N<d_N\}.                           \label{eq:pfail}
\end{equation}
\item If $R<q$, then $P_{\fail}^{(N)}\to0$ and
\begin{equation}
-\lim_{N\to\infty}\frac1N\log P_{\fail}^{(N)}
=\Dkl(R\|q),                                              \label{eq:exponent}
\end{equation}
where
\begin{equation}
\Dkl(R\|q)=R\log\tfrac{R}{q}+(1-R)\log\tfrac{1-R}{1-q}.   \label{eq:binaryD}
\end{equation}
\item If $R>q$, then $P_{\fail}^{(N)}\to1$.  More generally, for every deterministic sequence of $d_N\times N$ frame matrices and every decoder, arbitrary commands cannot be recovered whenever $M_N<d_N$.
\end{enumerate}
Thus the exact-recovery analog dimension threshold is
\begin{equation}
C_{\rm frame}=q\quad\text{locomotion dimensions per attempted contact}. \label{eq:capacity}
\end{equation}
\end{theorem}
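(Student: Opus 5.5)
The plan is to treat the three parts separately, using the independence of the Gaussian frame $G_N$ from the survival pattern $(B_i)$ to reduce everything to the distribution of $M_N$.

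For (i), I condition on $S_N=S$ with $|S|=m\ge d_N$. Since $G_N$ is independent of $S$, the submatrix $F_S$ is a $d_N\times m$ matrix with i.i.d.\ $\cN(0,1/N)$ entries. It suffices to show that the leading $d_N\times d_N$ block is almost surely nonsingular. Its determinant is a nonzero polynomial in the entries; it equals $1$ at the identity. The zero set of a nonzero polynomial has Lebesgue measure zero, and the Gaussian law is absolutely continuous, so this zero set has probability zero. An equivalent route is induction: each new column avoids the span of the previous ones with probability one, because a proper subspace is Lebesgue-null. Averaging over the countably many $S$ gives rank $d_N$ almost surely on $\{M_N\ge d_N\}$. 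On $\{M_N<d_N\}$ the rank is at most $M_N<d_N$, so the map fails to be injective. This yields \eqref{eq:pfail}.

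For (ii), $P_{\fail}^{(N)}=\Pp\{M_N<d_N\}$ with $M_N\sim\mathrm{Binomial}(N,q)$.
\begin{itemize}
\item \emph{Upper bound.} The Chernoff bound for the lower tail gives $\Pp\{M_N\le d_N\}\le\exp(-N\Dkl(d_N/N\|q))$ whenever $d_N/N<q$. Continuity of $\Dkl$ in its first argument then gives $\liminf -\frac1N\log P_{\fail}^{(N)}\ge\Dkl(R\|q)$, and in particular $P_{\fail}^{(N)}\to0$.
\item \emph{Lower bound.} I retain the single term $\Pp\{M_N=d_N-1\}=\binom{N}{d_N-1}q^{d_N-1}(1-q)^{N-d_N+1}$. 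Stirling's estimate $\binom{N}{k}\ge (N+1)^{-1}e^{NH(k/N)}$ makes this term at least $(N+1)^{-1}\exp(-N\Dkl((d_N-1)/N\|q))$.
\item \emph{Conclusion.} Since $(d_N-1)/N\to R$, the lower bound matches the upper bound, which gives \eqref{eq:exponent}.
\end{itemize}

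The main obstacle is the care needed with $d_N/N\to R$ rather than $d_N=RN$ exactly. It is handled by the continuity of $\Dkl(\cdot\|q)$ on $(0,1)$ together with the polynomial slack in the type bounds. One should also note the degenerate case $d_N-1<0$, which cannot occur for large $N$ because $R>0$.

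For (iii), if $R>q$, the weak law of large numbers gives $M_N/N\to q<R$ in probability, so $\Pp\{M_N<d_N\}\to1$. For the general converse, take any deterministic $F$, any decoder $g_S$, and any realization with $|S|=m<d_N$. The linear map $u\mapsto F_S^{\T}u$ from $\R^{d_N}$ to $\R^m$ has a nontrivial kernel. Hence there exist $u\ne u'$ with identical observations $F_S^{\T}u=F_S^{\T}u'$, so $g_S$ returns the same output for both and cannot recover both. Because $M_N$ does not depend on $F$, this holds with probability tending to one, and the threshold \eqref{eq:capacity} follows by combining (ii) and (iii).
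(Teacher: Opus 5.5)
Your proposal is correct and follows essentially the same route as the paper: almost-sure full row rank of the Gaussian surviving subframe (you add the determinant-polynomial justification that the paper only asserts), a Chernoff upper bound with a method-of-types single-term lower bound on the binomial lower tail, and the law of large numbers plus a null-space indistinguishability argument for the converse. One small refinement: for each $N$ there are only finitely many subsets $S$, so a union bound shows $F_N$ is full spark almost surely, which gives part (i) without needing independence of $G_N$ and $(B_i)$.
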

\begin{IEEEproof}
A Gaussian $d_N\times M_N$ matrix has full row rank almost surely when $M_N\ge d_N$.  This proves \eqref{eq:pfail}.  For $R<q$, the binomial lower-tail large-deviation principle gives \eqref{eq:exponent} \cite{DemboZeitouni1998}.  For $R>q$, the law of large numbers gives $M_N/N\to q<R$, so $M_N<d_N$ with probability tending to one.  For an arbitrary frame, $M_N<d_N$ implies $\rank(F_{S_N})<d_N$, so there is a nonzero $v\in\nullsp(F_{S_N}^{\T})$.  The commands $u$ and $u+v$ produce identical surviving coefficients; no decoder can recover both.
\end{IEEEproof}

\begin{corollary}[Finite relative redundancy]\label{cor:relative}
Every rate $R<q$ is reliably achievable with asymptotic relative redundancy
\begin{equation}
\eta=\frac{N}{d}>\frac1q.                                 \label{eq:finiteredundancy}
\end{equation}
The absolute number of contacts still grows to drive error probability to zero, but contacts per useful locomotion dimension remain finite.
\end{corollary}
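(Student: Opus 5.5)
The plan is to derive Corollary~\ref{cor:relative} directly from Theorem~\ref{thm:randomcoding}(ii) by fixing a target rate and choosing dimensions that realize it.

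\textbf{Choice of dimensions.} Fix $R\in(0,q)$ and set $d_N=\lfloor RN\rfloor$, so that $d_N/N\to R$. The Gaussian frame sequence \eqref{eq:gaussframe} then satisfies the hypotheses of Theorem~\ref{thm:randomcoding}.

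\textbf{Reliability.} Part (i) of that theorem reduces the failure probability to $P_{\fail}^{(N)}=\Pp\{M_N<d_N\}$. Part (ii) then gives $P_{\fail}^{(N)}\to0$, with exponent $\Dkl(R\|q)>0$. This exponent is strictly positive because $R\neq q$ and binary divergence vanishes only at equality. Hence the rate $R$ is reliably achievable in the sense of exact noiseless recovery. If an existence statement for deterministic frames is wanted, I would note that $P_{\fail}^{(N)}$ is an average over the frame ensemble. Some realization $F_N$ therefore does at least as well, with failure probability at most $P_{\fail}^{(N)}$ over the terrain randomness alone.

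\textbf{Redundancy.} The redundancy is $\eta_N=N/d_N$, which converges to $1/R$. Since $R<q$, we have $1/R>1/q$, which is \eqref{eq:finiteredundancy}. Because every $R<q$ is admissible, any redundancy strictly greater than $1/q$ is attainable.

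\textbf{The two qualitative remarks.} First, the absolute number of contacts must grow. For fixed $N$, the probability $\Pp\{M_N<d_N\}\ge(1-q)^N>0$ does not vanish, so driving the error to zero forces $N\to\infty$, and $d_N$ grows with it. Second, the number of contacts per useful dimension stays bounded near $1/R$. This is the contrast with repetition, where $d=1$ and $\eta\to\infty$.

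\textbf{Main obstacle.} The only real subtlety is interpretive. The quantifier ``every rate $R<q$'' must be matched with the strict inequality $\eta>1/q$, and not with $\eta\ge1/q$. The point $R=q$ is excluded because at $R=q$ the exponent $\Dkl(q\|q)=0$. There, by the central limit theorem, $\Pp\{M_N<d_N\}$ tends to a constant near $1/2$ rather than to zero, depending on the rounding in $d_N$. I would state this boundary behavior explicitly, so that the strict inequality is seen to be sharp and to be compatible with part (iii) of the theorem.
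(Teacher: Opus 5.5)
Your derivation of the corollary is correct and is essentially the paper's route: the corollary follows immediately from Theorem~\ref{thm:randomcoding}(ii), with $\eta_N=N/d_N\to1/R>1/q$. Your averaging argument for a deterministic frame is valid. The paper notes something stronger: every full-spark frame, and hence almost every Gaussian draw, has failure probability exactly $\Pp\{M_N<d_N\}$.

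The boundary discussion you plan to add is wrong as stated, even though it is not needed for the corollary. A zero exponent $\Dkl(q\|q)=0$ only rules out exponential decay; it does not rule out reliability. When $d_N/N\to q$, the limit of $\Pp\{M_N<d_N\}$ is governed by the second-order term $(d_N-qN)/\sqrt N$, not by rounding.
\begin{itemize}
\item For $d_N=qN+O(1)$, which includes $\lfloor qN\rfloor$, the limit is exactly $1/2$.
\item For $d_N=qN+c\sqrt N$, the limit can be any prescribed value in $(0,1)$.
\item For $d_N=\lfloor qN-N^{2/3}\rfloor$, Hoeffding's inequality gives $\Pp\{M_N<d_N\}\le e^{-2N^{1/3}}\to0$, while still $d_N/N\to q$ and $\eta_N\to1/q$.
\end{itemize}
So asymptotic redundancy $1/q$ itself is reachable with vanishing, subexponential failure probability. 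The strict inequality in \eqref{eq:finiteredundancy} is therefore not sharp in the sense you claim.

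What part (iii) does make sharp is the infimum: any sequence with vanishing failure probability must have $\liminf_N\eta_N\ge1/q$. Theorem~\ref{thm:rateless} likewise attains $\E\tau/d=1/q$ exactly in the variable-length setting. You should either drop the boundary remark or restrict it explicitly to the choice $d_N=qN+O(1)$.

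A smaller point: your bound $\Pp\{M_N<d_N\}\ge(1-q)^N$ needs $d_N\ge1$ and $q<1$. At $q=1$ no growth in $N$ is needed, which is why the paper says a fixed number of legs cannot \emph{generally} suffice.
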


For every fixed $N$, the Gaussian frame in \eqref{eq:gaussframe} is full spark almost surely, so it may be drawn once and then fixed as a deterministic gait.  In the erasure-only exact-recovery model, every full-spark frame has failure probability $\Pp\{M_N<d_N\}$; randomness is therefore not essential to the rank threshold or its binomial exponent.  As in other random coding arguments, its role is as an existence and expurgation device and, under residual noise, as a source of surviving subframes with an analyzable singular-value spectrum and a nonzero stiffness margin below $R=q$.  A mechanically feasible random ensemble may likewise be drawn in a basis of admissible body and leg waves and expurgated for poor conditioning.

This is a fundamental limit statement, showing positive useful rate.  A fixed absolute number of legs cannot generally make an independent all-contact-loss event arbitrarily unlikely.  The finite-length sharpening of this threshold is illustrated in Fig.~\ref{fig:phase}.

\subsection{Stable Recovery with Residual Contact Noise}
Let
\begin{equation}
Y_S=F_S^{\T}U+Z_S,\qquad Z_S\sim\cN(0,\sigma^2I).          \label{eq:noisyrandom}
\end{equation}

\begin{theorem}[Random-frame stiffness and noise amplification]\label{thm:randomstable}
Suppose $R<q$.  Almost surely,
\begin{align}
\lambda_{\min}(F_SF_S^{\T})&\longrightarrow(\sqrt q-\sqrt R)^2, \label{eq:lmin}\\
\lambda_{\max}(F_SF_S^{\T})&\longrightarrow(\sqrt q+\sqrt R)^2. \label{eq:lmax}
\end{align}
For the unregularized least-squares decoder on the full-rank event,
\begin{equation}
\lim_{N\to\infty}\frac1{d_N}\E\!\left[\|\widehat U-U\|^2\mid M_N\ge d_N+2\right]
=\frac{\sigma^2}{q-R}.                                   \label{eq:randommse}
\end{equation}
Thus $R\uparrow q$ produces both divergent MSE and a critically soft mechanical mode.
\end{theorem}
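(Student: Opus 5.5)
Conditional on the active set, the surviving subframe is $F_S=\frac{1}{\sqrt N}G_S$, where $G_S$ is a $d_N\times M_N$ matrix with i.i.d.\ $\cN(0,1)$ entries. This holds because $B$ is independent of $G_N$, so selecting columns by $S$ does not bias their distribution. Write
\begin{equation*}
F_SF_S^{\T}=\frac{M_N}{N}\cdot\frac{1}{M_N}G_SG_S^{\T}.
\end{equation*}
The plan is to combine two almost-sure limits. First, the strong law gives $M_N/N\to q$ a.s. Second, the matrix $W=\frac{1}{M}G_SG_S^{\T}$ is a real Wishart matrix with aspect ratio $d_N/M_N\to R/q=:\gamma<1$. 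By the Bai--Yin theorem, its extreme eigenvalues converge a.s.\ to $(1\pm\sqrt\gamma)^2$. Multiplying by $q$ gives $q(1\pm\sqrt{R/q})^2=(\sqrt q\pm\sqrt R)^2$, which is \eqref{eq:lmin}--\eqref{eq:lmax}. One technical point needs care: Bai--Yin is stated for deterministic dimension sequences, while here $M_N$ is random. I would handle this by conditioning on the Bernoulli sequence. On the probability-one event where $M_N/N\to q$, the conditional law of $G_S$ is that of a Gaussian matrix of deterministic size $d_N\times m_N$ with $d_N/m_N\to\gamma$, so Bai--Yin applies conditionally. Fubini then transfers the a.s.\ statement to the joint space. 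An alternative is to use Gordon/Davidson--Szarek concentration for Gaussian singular values with a Borel--Cantelli argument, which avoids citing Bai--Yin for a triangular array.

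For the MSE claim, on the full-rank event Corollary~\ref{cor:pseudoinverse} gives $\widehat U-U=(F_SF_S^{\T})^{-1}F_SZ_S$. Its conditional covariance is $\sigma^2(F_SF_S^{\T})^{-1}$, so
\begin{equation*}
\E[\|\widehat U-U\|^2\mid S,G]=\sigma^2\tr(F_SF_S^{\T})^{-1}=\sigma^2 N\,\tr(G_SG_S^{\T})^{-1}.
\end{equation*}
For fixed $M=m\ge d+2$, the exact inverse-Wishart mean $\E[(G_SG_S^{\T})^{-1}]=I_d/(m-d-1)$ gives
\begin{equation*}
\frac{1}{d_N}\E[\|\widehat U-U\|^2\mid M_N=m]=\frac{\sigma^2N}{m-d_N-1}.
\end{equation*}
This explains the conditioning on $M_N\ge d_N+2$: it is exactly the condition for the inverse-Wishart mean to be finite. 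It remains to average $N/(M_N-d_N-1)$ over $M_N$ conditioned on $M_N\ge d_N+2$ and show the average tends to $1/(q-R)$. Pointwise, $N/(M_N-d_N-1)\to1/(q-R)$ a.s. Domination requires care, because the ratio can reach $N$ when $M_N=d_N+2$. I would split the expectation into two events. On the event $\{M_N\ge (q-\epsilon)N\}$ the ratio is bounded by $1/(q-\epsilon-R-o(1))$, and on this event a Hoeffding concentration bound gives convergence. On the complementary event the ratio is at most $N$, and by Hoeffding this event has probability at most $e^{-cN}$. Its contribution is therefore at most $Ne^{-cN}\to0$. The conditioning event has probability tending to one because $R<q$, which makes renormalizing by it harmless.

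The main obstacle is the randomness of $M_N$ inside the Bai--Yin limit, that is, making the triangular-array, random-aspect-ratio application rigorous. The MSE part is an exact computation followed by a routine large-deviation truncation. The closing interpretation follows directly from the limits. As $R\uparrow q$, \eqref{eq:lmin} tends to $0$, so by Theorem~\ref{thm:physical} the minimum stiffness eigenvalue, and with it the settling rate $\alpha_S$ of Proposition~\ref{prop:settling}, vanishes. At the same time $\sigma^2/(q-R)$ diverges.
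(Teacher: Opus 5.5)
Your proposal is correct and follows the same route as the paper. Both condition on $M_N=m$ to obtain a Wishart matrix, apply Bai--Yin rescaled by $M_N/N\to q$, use the inverse-Wishart mean $\E[(F_SF_S^{\T})^{-1}\mid M_N=m]=\frac{N}{m-d_N-1}I_d$, and then pass to the limit over $M_N$. Two of your steps go beyond what the paper writes. Conditioning on the Bernoulli sequence (or the concentration/Borel--Cantelli alternative) justifies applying Bai--Yin with a random dimension, and the Hoeffding truncation controls the $N/(M_N-d_N-1)\le N$ tail; together they make rigorous what the paper compresses into ``using $M_N/N\to q$.''
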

\begin{IEEEproof}
Conditional on $M_N=m$, $NF_SF_S^{\T}$ is a $d_N$-dimensional Wishart matrix with $m$ degrees of freedom.  Since $m/N\to q$ almost surely, the Bai--Yin extreme-eigenvalue limits give \eqref{eq:lmin}--\eqref{eq:lmax} \cite{BaiYin1993,TulinoVerdu2004}.  For $m>d_N+1$,
\begin{equation*}
\E[(F_SF_S^{\T})^{-1}\mid M_N=m]=\frac{N}{m-d_N-1}I_d.
\end{equation*}
Taking normalized trace, conditioning on $M_N\ge d_N+2$, and using $M_N/N\to q$ gives \eqref{eq:randommse}.  If the decoder instead uses a fallback with uniformly bounded conditional
mean-square error per mode whenever $M_N\le d_N+1$, then the same limit
holds without conditioning.
\end{IEEEproof}

\begin{figure*}
\centering
\subfloat[Finite-length noninjectivity probability for contact survival $q=0.7$.]{\includegraphics[width=0.47\textwidth]{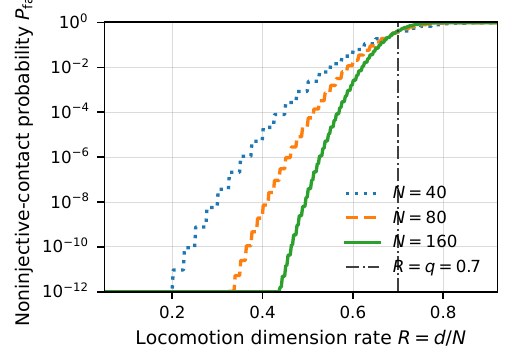}\label{fig:phase}}
\hfill
\subfloat[Below threshold, the MSE diverges and the mechanical stiffness margin vanishes as $R\uparrow q$.]{\includegraphics[width=0.47\textwidth]{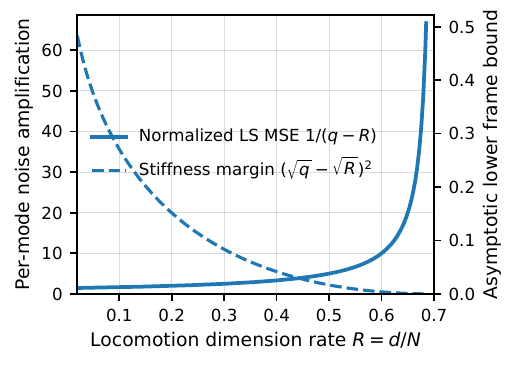}\label{fig:soft}}
\caption{Random-frame coding transition.  Increasing $N$ sharpens the threshold at the surviving-contact fraction.  The same spectral gap that determines numerical decoding stability is the weakest stiffness of the compliant decoder.}
\label{fig:random}
\end{figure*}

The simultaneous noise blow-up and stiffness collapse are shown in Fig.~\ref{fig:soft}.

\begin{proposition}[Bayesian random-frame limit]\label{prop:bayesrandom}
Let $U\sim\cN(0,\sigma_U^2I)$, define $\gamma=\sigma_U^2/\sigma^2$, and let $\nu_{q,R}$ be the limiting eigenvalue distribution of $F_SF_S^{\T}$.  For $R<q$, the asymptotic per-mode MMSE and mutual information are
\begin{align}
D_{\rm mode}(\gamma)&=\sigma_U^2\int\frac{1}{1+\gamma\lambda}\,d\nu_{q,R}(\lambda), \label{eq:mpmmse}\\
\mathcal I_{\rm mode}(\gamma)&=\frac12\int\log(1+\gamma\lambda)\,d\nu_{q,R}(\lambda). \label{eq:mpinfo}
\end{align}
The information per attempted contact is $R\mathcal I_{\rm mode}$.
\end{proposition}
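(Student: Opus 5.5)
The plan is to condition on the active set, use the Gaussian posterior formulas of Theorem~\ref{thm:physical}, and then pass to the spectral limit of $F_SF_S^{\T}$. Conditional on $S$ with $|S|=m$, take prior $P_0=\sigma_U^2I$ and noise $R_S=\sigma^2I$. Theorem~\ref{thm:physical} then gives
\begin{equation*}
P_S=\left(\sigma_U^{-2}I+\sigma^{-2}F_SF_S^{\T}\right)^{-1}=\sigma_U^2\left(I+\gamma F_SF_S^{\T}\right)^{-1}.
\end{equation*}
Proposition~\ref{prop:locmmse} with $Q=I/d_N$ gives the per-mode MMSE as
\begin{equation*}
\frac{1}{d_N}\tr P_S=\sigma_U^2\int\frac{1}{1+\gamma\lambda}\,d\nu_N(\lambda),
\end{equation*}
where $\nu_N$ is the empirical eigenvalue distribution of $F_SF_S^{\T}$.

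For the information term, $U$ and $Y_S$ are jointly Gaussian given $S$. The standard log-determinant formula gives
\begin{equation*}
I(U;Y_S\mid S)=\tfrac12\log\det\left(I+\gamma F_SF_S^{\T}\right)=\tfrac{d_N}{2}\int\log(1+\gamma\lambda)\,d\nu_N(\lambda).
\end{equation*}
This holds whatever the rank of $F_S$, so the full-rank event needs no separate treatment here. Dividing by $d_N$ gives the per-mode information. Dividing instead by $N$ and using $d_N/N\to R$ gives the statement $R\,\mathcal I_{\rm mode}$ for information per attempted contact.

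The remaining step is the limit $\nu_N\to\nu_{q,R}$ together with convergence of the integrals. As in the proof of Theorem~\ref{thm:randomstable}, conditional on $M_N=m$ the matrix $NF_SF_S^{\T}=G_SG_S^{\T}$ is Wishart with $m$ degrees of freedom, and $M_N/N\to q$ almost surely by the strong law.
\begin{itemize}
\item \textbf{Marchenko--Pastur.} With aspect ratio $d_N/m\to R/q$, the empirical distribution of $F_SF_S^{\T}=(m/N)\cdot\frac1m G_SG_S^{\T}$ converges almost surely to a scaled Marchenko--Pastur law. This limit is $\nu_{q,R}$, supported on $[(\sqrt q-\sqrt R)^2,(\sqrt q+\sqrt R)^2]$.
\item \textbf{Independence.} The Bernoulli selection is independent of $G_N$, so it suffices to apply the law along each deterministic sequence $m_N$ with $m_N/N\to q$ and then integrate over $S$.
\item \textbf{Weak convergence to the integrals.} The test function $\lambda\mapsto(1+\gamma\lambda)^{-1}$ is bounded and continuous on $[0,\infty)$, so weak convergence gives the MMSE limit directly. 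For $\log(1+\gamma\lambda)$, which is unbounded, I would invoke the Bai--Yin bound \eqref{eq:lmax}: it keeps all eigenvalues eventually below $(\sqrt q+\sqrt R)^2+\epsilon$, so the function may be treated as bounded on the relevant support.
\end{itemize}

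The expected main obstacle is upgrading almost-sure convergence to convergence of the expected quantities. If the statement is read as an expectation over $(F,S)$, I would use dominated convergence. The MMSE integrand is bounded by $\sigma_U^2$. The normalized information is bounded by $\frac12\log(1+\gamma\,\tr(F_SF_S^{\T})/d_N)$ via Jensen, and its expectation is uniformly controlled because $\E\,\tr(F_SF_S^{\T})=qd_N$. With these dominations the expected per-mode quantities converge to \eqref{eq:mpmmse}--\eqref{eq:mpinfo}. Unlike the unregularized decoder of Theorem~\ref{thm:randomstable}, no conditioning on $M_N\ge d_N+2$ is needed, since the prior regularizes $J_S$.
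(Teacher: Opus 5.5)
Your proposal is correct and follows the paper's own route. You write the posterior covariance $\sigma_U^2(I+\gamma F_SF_S^{\T})^{-1}$ and the log-determinant in the eigenbasis of $F_SF_S^{\T}$, then pass to the scaled Marchenko--Pastur limit, filling in details the paper leaves implicit: conditioning on $S$, the Bai--Yin truncation for the unbounded $\log$ test function, and the optional passage to expectations.

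In that last step, a uniform bound on $\E\,\tr(F_SF_S^{\T})/d_N$ does not by itself justify dominated convergence. The repair is immediate: the majorant $\frac{\gamma}{2}\tr(F_SF_S^{\T})/d_N$ converges almost surely to $\gamma q/2$ and has constant mean $\gamma q/2$, so the generalized (Pratt) dominated convergence theorem applies.
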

\begin{IEEEproof}
Diagonalize $F_SF_S^{\T}$ in the Gaussian posterior covariance and log-determinant formulas, then use convergence of the empirical spectral distribution to the scaled Marchenko--Pastur law \cite{TulinoVerdu2004}.
\end{IEEEproof}

\subsection{A Distortion Converse}
The rank converse concerns exact recovery.  A noisy, distortion-based benchmark follows from Gaussian rate-distortion theory.

\begin{proposition}[Information required for locomotion fidelity]\label{prop:rdconverse}
Let $\mathbf U\sim\cN(0,\sigma_U^2I_m)$ be a stack of locomotion commands, and let $\widehat{\mathbf U}$ be produced by any gait encoder, terrain interaction, and physical or digital decoder.  If
\begin{equation}
\frac1m\E\|\mathbf U-\widehat{\mathbf U}\|^2\le D<\sigma_U^2, \label{eq:distconstraint}
\end{equation}
then
\begin{equation}
I(\mathbf U;\mathbf Y,S)\ge \frac{m}{2}\log\frac{\sigma_U^2}{D}. \label{eq:rdconverse}
\end{equation}
\end{proposition}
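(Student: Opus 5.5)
The plan is the standard Gaussian rate--distortion converse, combined with a data-processing step through the chain $\mathbf U\to(\mathbf Y,S)\to\widehat{\mathbf U}$.

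\textbf{Markov chain.} Whatever the gait encoder, terrain law and decoder, $\widehat{\mathbf U}$ is a (possibly randomized) function of the surviving observations $(\mathbf Y,S)$. Any decoder randomness, such as initial conditions of the relaxation dynamics \eqref{eq:overdamped}, is taken independent of $\mathbf U$ given $(\mathbf Y,S)$. Hence $\mathbf U\to(\mathbf Y,S)\to\widehat{\mathbf U}$ is a Markov chain. The data-processing inequality then gives $I(\mathbf U;\mathbf Y,S)\ge I(\mathbf U;\widehat{\mathbf U})$. For the physical decoder of Theorem~\ref{thm:physical}, $\widehat{\mathbf U}$ is a deterministic function of $(\mathbf Y,S)$, so the chain holds trivially. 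Stating the conditional-independence assumption explicitly is the only delicate modeling point. The terrain law is exogenous to the command, as assumed in the channel model.

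\textbf{Lower bound on $I(\mathbf U;\widehat{\mathbf U})$.} Write
\[
I(\mathbf U;\widehat{\mathbf U})=h(\mathbf U)-h(\mathbf U\mid\widehat{\mathbf U})\ge h(\mathbf U)-h(\mathbf U-\widehat{\mathbf U}),
\]
using translation invariance of differential entropy and the fact that conditioning reduces entropy. We have $h(\mathbf U)=\frac m2\log(2\pi e\sigma_U^2)$. Among vectors with $\E\|\mathbf E\|^2\le mD$, the Gaussian with covariance $D I_m$ maximizes entropy. This follows from $h(\mathbf E)\le\frac12\log\big((2\pi e)^m\det\Sigma_E\big)$ together with AM--GM on the eigenvalues, which gives $\det\Sigma_E\le(\tr\Sigma_E/m)^m\le D^m$; using the second-moment matrix in place of the covariance only loosens the bound. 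Hence $h(\mathbf U-\widehat{\mathbf U})\le\frac m2\log(2\pi eD)$, and subtracting yields \eqref{eq:rdconverse}. If $\mathbf U-\widehat{\mathbf U}$ has no density, its differential entropy is $-\infty$ or undefined. In that case I would instead invoke the Gaussian rate--distortion function $R(D)=\frac m2\log(\sigma_U^2/D)$ directly, since $I(\mathbf U;\widehat{\mathbf U})\ge R(D)$ holds by definition of $R(D)$ as an infimum over test channels meeting \eqref{eq:distconstraint}.

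\textbf{Main obstacle.} The computation above is routine. The real issue is making the rate--distortion infimum route rigorous for arbitrary (including degenerate) decoders, and confirming that mixed discrete--continuous observations $(\mathbf Y,S)$ pose no problem. The latter is handled by the chain rule $I(\mathbf U;\mathbf Y,S)=I(\mathbf U;S)+I(\mathbf U;\mathbf Y\mid S)$, with $I(\mathbf U;S)=0$ by exogeneity. I would therefore present the proof via the $R(D)$ definition, with the entropy computation above serving as the verification of the Gaussian formula.
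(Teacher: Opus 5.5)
Your proposal is correct and follows the paper's proof, which is exactly the data-processing inequality along $\mathbf U\to(\mathbf Y,S)\to\widehat{\mathbf U}$ followed by the Gaussian rate--distortion converse (cited to Berger). The only differences are that you unpack that converse through the maximum-entropy bound $h(\mathbf U-\widehat{\mathbf U})\le\frac m2\log(2\pi e D)$ and state explicitly the decoder's conditional independence from $\mathbf U$ given $(\mathbf Y,S)$; your degenerate-case fallback is also unnecessary, since if $I(\mathbf U;\widehat{\mathbf U})<\infty$ the conditional law of $\mathbf U$ given $\widehat{\mathbf U}$ is absolutely continuous (so the entropy computation is valid), while if it is infinite the bound holds trivially.
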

\begin{IEEEproof}
Data processing gives $I(\mathbf U;\mathbf Y,S)\ge I(\mathbf U;\widehat{\mathbf U})$.  The Gaussian rate-distortion converse \cite{Berger1971} lower-bounds the latter by the right side of \eqref{eq:rdconverse}.
\end{IEEEproof}
Comparing to this rate-distortion bound quantifies the price paid by mechanically realizable (linear) frame encoders and quadratic passive decoders relative to an unconstrained source--channel architecture for legged locomotion.

\section{Information, MMSE, and Mechanical Compliance}
The frame operator controls more than reconstruction.  In the Gaussian model it is also an information matrix, so it yields an intriguing direct correspondence among information, estimation, mechanics, and locomotion.  Here all information quantities are evaluated for known realized frames.

\begin{theorem}[Information--compliance identity]\label{thm:infocompliance}
Under \eqref{eq:channel} and \eqref{eq:prior},
\begin{align}
I(U;Y_S\mid S)
&=\frac12\log\frac{\det P_0}{\det P_S}                    \label{eq:infoidentity}\\
&=\frac12\log\frac{\det J_S}{\det P_0^{-1}}.             \label{eq:stiffnessinfo}
\end{align}
If independent contact $i$ has continuously variable precision $\rho_i$, then
\begin{equation}
\frac{\partial}{\partial\rho_i}I(U;Y_S\mid S)
=\frac12 f_i^{\T}P_Sf_i.          \label{eq:precisionderivative}
\end{equation}
\end{theorem}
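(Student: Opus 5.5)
We will prove the identity through the Gaussian entropy formula, reusing the conditional structure already established in Theorem~\ref{thm:physical}. Conditional on $S$, the pair $(U,Y_S)$ is jointly Gaussian: $U\sim\cN(\mu,P_0)$, and $Z_S$ is independent of $U$ with covariance $R_S$. We therefore write
\begin{equation*}
I(U;Y_S\mid S)=h(U\mid S)-h(U\mid Y_S,S).
\end{equation*}
The first term equals $\tfrac12\log\big((2\pi e)^d\det P_0\big)$, because the contact-availability process is exogenous to the command. For the second term, Theorem~\ref{thm:physical} shows that $U$ given $(Y_S=y_S,S)$ is Gaussian with covariance $P_S=J_S^{-1}$, and this covariance does not depend on $y_S$. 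Hence $h(U\mid Y_S,S)=\tfrac12\log\big((2\pi e)^d\det P_S\big)$. Subtracting the two entropies gives \eqref{eq:infoidentity}. Substituting $\det P_S=1/\det J_S$ and $\det P_0=1/\det P_0^{-1}$ then gives \eqref{eq:stiffnessinfo}.

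As a consistency check, we can use the matrix determinant lemma. Write $J_S=P_0^{-1}\big(I+P_0F_SR_S^{-1}F_S^{\T}\big)$. Sylvester's identity then gives
\begin{equation*}
\det J_S/\det P_0^{-1}=\det\big(I+R_S^{-1/2}F_S^{\T}P_0F_SR_S^{-1/2}\big).
\end{equation*}
The right-hand side is the familiar channel-side log-determinant formula $\tfrac12\log\det(I+R_S^{-1}F_S^{\T}P_0F_S)$. This step is optional, but it confirms that the result does not depend on $R_S$ being invertible in a degenerate way. If some $\rho_i=0$, the diagonal form \eqref{eq:precision} still makes $J_S$ well defined.

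For the derivative \eqref{eq:precisionderivative}, we use the diagonal precision \eqref{eq:precision} to write
\begin{equation*}
J_S=P_0^{-1}+\sum_{j\in S}\rho_jf_jf_j^{\T},
\end{equation*}
so that $\partial J_S/\partial\rho_i=f_if_i^{\T}$, which is exactly the rank-one stiffness of Lemma~\ref{lem:rankone}. Jacobi's formula $\partial\log\det J=\tr(J^{-1}\partial J)$ then gives
\begin{equation*}
\frac{\partial}{\partial\rho_i}\tfrac12\log\det J_S=\tfrac12\tr\big(P_Sf_if_i^{\T}\big)=\tfrac12f_i^{\T}P_Sf_i.
\end{equation*}
The term $\det P_0^{-1}$ does not depend on $\rho_i$, so this is the derivative of $I(U;Y_S\mid S)$.

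The only real subtlety, and the main thing to state carefully, is the meaning of varying $\rho_i$. The derivative is taken with the frame, the prior and the other precisions held fixed. It should hold for all $\rho_i\ge0$, including the erasure endpoint $\rho_i=0$ as a one-sided derivative, since $J_S\succeq P_0^{-1}\succ0$ keeps $\log\det J_S$ smooth there. It is also convenient to parametrize by precision rather than variance. This makes $J_S$ affine in $\rho_i$ and avoids differentiating $R_S$ through an inverse, so no obstacle beyond this bookkeeping is expected.
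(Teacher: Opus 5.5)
Your proof is correct. For the two determinant identities it uses a different decomposition of mutual information than the paper; the derivative step is the same.

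You write $I(U;Y_S\mid S)=h(U)-h(U\mid Y_S,S)$ and take the posterior entropy from Theorem~\ref{thm:physical}. This gives $\frac12\log(\det P_0/\det P_S)$ at once, and the stiffness form then follows from $P_S=J_S^{-1}$ with no matrix identity at all. The paper instead starts from the channel-side formula $\frac12\log\det(I+P_0^{1/2}F_SR_S^{-1}F_S^{\T}P_0^{1/2})$, which is the $h(Y_S)-h(Y_S\mid U)$ decomposition. It then needs the determinant lemma to reach $\det J_S/\det P_0^{-1}$; that is exactly the step you treated as an optional consistency check.

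Your route is shorter and works entirely in precision coordinates, so an erased or zero-precision contact simply drops a term from $J_S$. Its one real dependency is that the posterior is Gaussian with a covariance that does not depend on $y_S$. Knowing only $\Cov(U\mid Y_S,S)=P_S$ would, by maximum entropy, give just $I\ge\frac12\log(\det P_0/\det P_S)$. You do invoke Gaussianity explicitly, and it holds here because $(U,Y_S)$ is jointly Gaussian given $S$.

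The paper's route does not need the posterior computation. It also yields the SNR-matrix form that connects directly to the parametrization used in the vector I--MMSE relation.

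Your derivative step, $\partial J_S/\partial\rho_i=f_if_i^{\T}$ followed by Jacobi's formula on $\frac12\log\det J_S$, is identical to the paper's.
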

\begin{IEEEproof}
The mutual information of a linear Gaussian channel is
\begin{equation*}
\frac12\log\det\!\left(I+P_0^{1/2}F_SR_S^{-1}F_S^{\T}P_0^{1/2}\right).
\end{equation*}
The determinant lemma and $P_S=J_S^{-1}$ yield \eqref{eq:infoidentity}--\eqref{eq:stiffnessinfo}.  Differentiating $\frac12\log\det J_S$ gives $\frac12\tr(P_Sf_if_i^{\T})$.
\end{IEEEproof}

Information is therefore the logarithmic shrinkage of the compliance ellipsoid, or equivalently the logarithmic increase in stiffness volume.  A firm contact is informative when it constrains a direction in which the body is currently soft.

\begin{proposition}[Vector I--MMSE relation]\label{prop:immse}
Let
\begin{equation}
Y_S(\gamma)=\sqrt\gamma\,R_S^{-1/2}F_S^{\T}U+N,
\quad N\sim\cN(0,I).              \label{eq:snrchannel}
\end{equation}
For arbitrary finite-power $U$, let
\begin{equation*}
E_S(\gamma)=\E[(U-\E[U\mid Y_S])(U-\E[U\mid Y_S])^{\T}].
\end{equation*}
Then
\begin{equation}
\frac{d}{d\gamma}I(U;Y_S(\gamma)\mid S)
=\frac12\tr\!\left(R_S^{-1/2}F_S^{\T}E_S(\gamma)F_SR_S^{-1/2}\right). \label{eq:immse}
\end{equation}
\end{proposition}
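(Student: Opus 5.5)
The proof reduces the vector statement to the scalar-output I--MMSE identity of Guo, Shamai, and Verd\'u \cite{GuoShamaiVerdu2005}. Conditioning on $S$ throughout, I set $H\triangleq R_S^{-1/2}F_S^{\T}\in\R^{|S|\times d}$. Then \eqref{eq:snrchannel} reads $Y_S(\gamma)=\sqrt\gamma\,HU+N$ with $N\sim\cN(0,I)$ independent of $U$, given $S$. The active set is exogenous to the command, and the frame and noise parameters are known. So it suffices to prove the identity for a fixed $S$ and then note that the conditional mutual information is the $S$-average, if an average over $S$ is intended.

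Write $X=HU$, a finite-power random vector in $\R^{|S|}$. Since $Y_S(\gamma)$ depends on $U$ only through $X$, $U\to X\to Y_S$ is Markov, and $X$ is a deterministic function of $U$. Together these give $I(U;Y_S)=I(X;Y_S)$. The vector I--MMSE theorem of \cite{GuoShamaiVerdu2005} for $Y=\sqrt\gamma X+N$ with standard Gaussian noise and arbitrary input of finite second moment states
\begin{equation*}
\frac{d}{d\gamma}I(X;\sqrt\gamma X+N)=\frac12\E\|X-\E[X\mid Y]\|^2 .
\end{equation*}
By linearity of conditional expectation, $\E[X\mid Y]=H\E[U\mid Y]$. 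Hence $X-\E[X\mid Y]=H(U-\E[U\mid Y])$, and
\begin{equation*}
\E\|X-\E[X\mid Y]\|^2=\tr\!\left(H E_S(\gamma)H^{\T}\right)=\tr\!\left(R_S^{-1/2}F_S^{\T}E_S(\gamma)F_SR_S^{-1/2}\right),
\end{equation*}
which is \eqref{eq:immse}.

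If one prefers not to cite the general result, I would prove the scalar-noise I--MMSE identity directly using the incremental-channel argument. Compare $Y(\gamma+\delta)$ with $Y(\gamma)$ by viewing the SNR increment as an independent weak Gaussian observation. Expand $I$ to first order in $\delta$ using the small-SNR behavior $I(X;\sqrt\delta X'+N')=\frac\delta2\E\|X'\|^2+o(\delta)$, applied to the conditional law of $X$ given $Y(\gamma)$. Then use the chain rule. This is the main obstacle, since the interchange of limits and the $o(\delta)$ uniformity require the finite-power assumption and a dominated-convergence argument. For this reason I would rely on \cite{GuoShamaiVerdu2005}, where these regularity issues are handled for arbitrary finite-power inputs.

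Two consistency checks follow. When $R_S$ is singular on erased coordinates, those coordinates are simply removed, since $\rho_i=0$ contributes nothing to either side. In the Gaussian case of \eqref{eq:prior}, the identity can be verified directly. Theorem~\ref{thm:infocompliance} gives $I=\frac12\log\det(I+\gamma P_0^{1/2}H^{\T}HP_0^{1/2})$, whose $\gamma$-derivative is $\frac12\tr(H^{\T}H(P_0^{-1}+\gamma H^{\T}H)^{-1})$. This equals $\frac12\tr(HP_S(\gamma)H^{\T})$ with $P_S(\gamma)=E_S(\gamma)$, matching \eqref{eq:immse}.
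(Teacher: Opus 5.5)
Your proposal is correct and follows the paper's route: the paper gives no separate argument and simply invokes the vector I--MMSE theorem of Guo, Shamai, and Verd\'u (with Palomar--Verd\'u) for the effective matrix $R_S^{-1/2}F_S^{\T}$, and your reduction via $X=HU$, $I(U;Y_S)=I(X;Y_S)$, and $\E[X\mid Y_S]=H\,\E[U\mid Y_S]$ is exactly what that citation supplies. One minor slip in your side remark: an erased contact with $\rho_i=0$ makes $R_S^{-1}$ singular, not $R_S$; the corresponding zero row of $H$ then contributes nothing to either side, as you say.
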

This vector I--MMSE identity of Guo, Shamai, and Verd\'u, with matrix gradients developed by Palomar and Verd\'u \cite{GuoShamaiVerdu2005,PalomarVerdu2006}, shows that the marginal value of improving overall contact quality is half the unresolved error as projected into the active contact coordinates.

\begin{theorem}[Information-to-locomotion inequality]\label{thm:infoloc}
Let $Q\succ0$ and define $D_Q(S)=\tr(QP_S)$.  Then
\begin{equation}
\frac{D_Q(S)}{d}
\ge (\det Q)^{1/d}(\det P_0)^{1/d}
\exp\!\left[-\frac{2}{d}I(U;Y_S\mid S)\right].            \label{eq:infoloc}
\end{equation}
Equality holds if and only if
\begin{equation}
Q^{1/2}P_SQ^{1/2}=cI_d            \label{eq:tasktight}
\end{equation}
for some $c>0$.
\end{theorem}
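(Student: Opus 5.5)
The plan is to apply the arithmetic--geometric mean inequality to the eigenvalues of the symmetrized task-weighted covariance, and then replace the resulting determinant by mutual information via Theorem~\ref{thm:infocompliance}.

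\textbf{Step 1 (symmetrize).} Set $M=Q^{1/2}P_SQ^{1/2}$. Because $Q\succ0$ and $P_S=J_S^{-1}\succ0$ by Theorem~\ref{thm:physical}, $M$ is symmetric positive definite. Its eigenvalues $\lambda_1,\ldots,\lambda_d$ are therefore all strictly positive. By cyclicity of the trace,
\begin{equation*}
D_Q(S)=\tr(QP_S)=\tr M=\sum_k\lambda_k .
\end{equation*}

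\textbf{Step 2 (AM--GM).} The AM--GM inequality gives
\begin{equation*}
\frac1d\sum_k\lambda_k\ge\Bigl(\prod_k\lambda_k\Bigr)^{1/d}=(\det M)^{1/d}=(\det Q)^{1/d}(\det P_S)^{1/d}.
\end{equation*}

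\textbf{Step 3 (insert information).} Identity \eqref{eq:infoidentity} states that $I(U;Y_S\mid S)=\frac12\log(\det P_0/\det P_S)$. Rearranging,
\begin{equation*}
\det P_S=\det P_0\,\exp[-2I(U;Y_S\mid S)].
\end{equation*}
Raising both sides to the power $1/d$ and substituting into Step~2 yields \eqref{eq:infoloc}.

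\textbf{Step 4 (equality).} Only the AM--GM step can be strict, since the other steps are identities. Equality in AM--GM for positive numbers holds if and only if all $\lambda_k$ equal a common value $c$. Because $M$ is symmetric, this is the same as $M=cI_d$. Positivity of $M$ forces $c>0$, which is condition \eqref{eq:tasktight}. Conversely, if $M=cI_d$, both sides of the inequality equal $c$.

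I expect no step to be a real obstacle. The only point needing care is that AM--GM is applied to the eigenvalues of the \emph{symmetric} matrix $M$, not directly to the non-symmetric product $QP_S$. This is what lets equality of eigenvalues translate into $M$ being scalar, and it relies on $Q\succ0$ so that $Q^{1/2}$ is invertible and $\det Q>0$.
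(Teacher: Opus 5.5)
Your proof is correct and follows the paper's argument essentially verbatim: AM--GM on the eigenvalues of $Q^{1/2}P_SQ^{1/2}$, then the substitution $\det P_S=\det P_0\,e^{-2I(U;Y_S\mid S)}$ from \eqref{eq:infoidentity}, with the equality case read off from equality in AM--GM. Your remark that symmetry of $M$ is what turns equal eigenvalues into $M=cI_d$ makes explicit a step the paper leaves implicit.
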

\begin{IEEEproof}
Apply the arithmetic--geometric mean inequality to the eigenvalues of $Q^{1/2}P_SQ^{1/2}$:
\begin{equation*}
\tfrac1d\tr(QP_S)\ge[\det(QP_S)]^{1/d}.
\end{equation*}
Equation \eqref{eq:infoidentity} gives $\det P_S=\det P_0\,e^{-2I}$.  Equality in arithmetic--geometric mean is exactly \eqref{eq:tasktight}.
\end{IEEEproof}

The theorem distinguishes three useful frame objectives, respectively to minimize average task MSE, maximize contact information, and protect the weakest mechanical mode:
\begin{align*}
\text{A-optimality:}&\quad \min\ \tr(QP_S),\\
\text{D-optimality:}&\quad \max\ \log\det J_S,\\
\text{E-optimality:}&\quad \max\ \lambda_{\min}(J_S).
\end{align*}
For an isotropic task metric and a feasible fixed-trace precision budget, an isotropic precision matrix simultaneously optimizes these criteria. For anisotropic tasks, their optimizers generally differ.

\section{Incremental Redundancy and Morphological HARQ}
In locomotion, one can consider not just feedforward but also feedback-based control analogous to hybrid automatic repeat request (HARQ).  Recent work describes passive force redistribution as mechanical intelligence and combines it with sensor-based control in a forward error correction (FEC)/ARQ analogy \cite{ChongEtAl2025Robust,HeEtAl2025}.  Open-loop frame coding is the forward-error-control component.  Contact sensing or a compliance threshold can add a reverse channel.  The robot may then transmit a new mechanical ``parity symbol'' by executing an additional gait primitive.

After data $\Dset_S$, suppose
\begin{equation}
U\mid\Dset_S\sim\cN(\widehat U_S,P_S).                   \label{eq:currentposterior}
\end{equation}
Candidate primitive $i$ gives
\begin{equation}
Y_i=f_i^{\T}U+Z_i,\quad Z_i\sim\cN(0,\rho_i^{-1}).       \label{eq:candidate}
\end{equation}

\begin{theorem}[One-step incremental redundancy]\label{thm:harq}
Adding candidate $i$ gives
\begin{align}
P_{S+i}&=P_S-
\frac{\rho_iP_Sf_if_i^{\T}P_S}{1+\rho_i f_i^{\T}P_Sf_i}, \label{eq:sherman}\\
\Delta I_i&=I(U;Y_i\mid\Dset_S)
=\frac12\log\!\left(1+\rho_i f_i^{\T}P_Sf_i\right).      \label{eq:gain}
\end{align}
Moreover,
\begin{equation}
D_Q(S)-D_Q(S+i)=
\frac{\rho_i f_i^{\T}P_SQP_Sf_i}
{1+\rho_i f_i^{\T}P_Sf_i}.                                \label{eq:msegain}
\end{equation}
Therefore, among equal-cost candidates, the exact greedy information rule is
\begin{equation}
 i_I^\star=\argmax_i\ \rho_i f_i^{\T}P_Sf_i,              \label{eq:infogreedy}
\end{equation}
and the exact greedy task-MMSE rule is
\begin{equation}
 i_Q^\star=\argmax_i\ 
\frac{\rho_i f_i^{\T}P_SQP_Sf_i}{1+\rho_i f_i^{\T}P_Sf_i}. \label{eq:msegreedy}
\end{equation}
\end{theorem}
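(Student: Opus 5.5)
The plan is to treat the new primitive as one more rank-one term in the stiffness matrix of Theorem~\ref{thm:physical}, and to read off each claim from that single update. Because the current posterior \eqref{eq:currentposterior} is Gaussian and $Z_i$ is independent of $U$ and of the past data given $S$, appending \eqref{eq:candidate} is a scalar linear-Gaussian observation with prior $\cN(\widehat U_S,P_S)$. By Theorem~\ref{thm:physical}, or equivalently by Lemma~\ref{lem:rankone} with one more active element, the updated precision is
\[
J_{S+i}=P_S^{-1}+\rho_i f_if_i^{\T}.
\]
\textbf{Covariance update.} Applying the Sherman--Morrison formula to this rank-one perturbation gives \eqref{eq:sherman} directly. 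To check it, multiply $P_S^{-1}+\rho_if_if_i^{\T}$ by the right-hand side of \eqref{eq:sherman} and confirm the product is $I_d$; the denominator $1+\rho_if_i^{\T}P_Sf_i>0$ because $P_S\succ0$.

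\textbf{Information gain.} I would use Theorem~\ref{thm:infocompliance} conditionally, with $P_S$ playing the role of $P_0$:
\[
\Delta I_i=\tfrac12\log\frac{\det P_S}{\det P_{S+i}}=\tfrac12\log\frac{\det\bigl(P_S^{-1}+\rho_if_if_i^{\T}\bigr)}{\det P_S^{-1}}.
\]
The matrix determinant lemma then reduces the ratio to $1+\rho_if_i^{\T}P_Sf_i$, which is \eqref{eq:gain}. As a cross-check, $Y_i$ given $\Dset_S$ has variance $f_i^{\T}P_Sf_i+\rho_i^{-1}$ against noise variance $\rho_i^{-1}$, which gives the same scalar Gaussian formula.

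\textbf{Task-MMSE gain.} Take $\tr(Q\,\cdot)$ of \eqref{eq:sherman} and use cyclicity, $\tr(QP_Sf_if_i^{\T}P_S)=f_i^{\T}P_SQP_Sf_i$. This yields \eqref{eq:msegain}, since Proposition~\ref{prop:locmmse} identifies $D_Q$ with $\tr(QP_S)$ for the posterior-mean decoder.

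\textbf{Greedy rules.} These follow by monotonicity. The map $x\mapsto\frac12\log(1+x)$ is strictly increasing, so maximizing \eqref{eq:gain} over equal-cost candidates is the same as maximizing $\rho_if_i^{\T}P_Sf_i$, which is \eqref{eq:infogreedy}. Maximizing the decrease \eqref{eq:msegain} is, by definition, \eqref{eq:msegreedy}.

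The only step that needs care is justifying that conditioning on $\Dset_S$ leaves a valid Gaussian prior for the new observation. This requires the exogeneity and conditional-independence assumptions on the noise stated in Section~II. The algebra itself is routine.
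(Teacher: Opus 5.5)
Your proposal is correct and follows the paper's proof step for step: Sherman--Morrison applied to $P_S^{-1}+\rho_i f_if_i^{\T}$ gives \eqref{eq:sherman}, the matrix determinant lemma gives \eqref{eq:gain}, and tracing $Q(P_S-P_{S+i})$ gives \eqref{eq:msegain}, with the greedy rules following by monotonicity. You also note that the conditional independence of $Z_i$ from $\Dset_S$ given $U$ is what justifies using $\cN(\widehat U_S,P_S)$ as the prior for the new observation, a detail the paper leaves implicit.
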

\begin{IEEEproof}
Equation \eqref{eq:sherman} is the Sherman--Morrison formula for $(P_S^{-1}+\rho_i f_if_i^{\T})^{-1}$ \cite{sherman1950adjustment}.  The matrix determinant lemma gives \eqref{eq:gain}; tracing $Q(P_S-P_{S+i})$ gives \eqref{eq:msegain}.
\end{IEEEproof}

The score $f_i^{\T}P_Sf_i$ is the mechanical compliance in the direction that primitive $i$ can constrain.  Precision $\rho_i$ discounts a contact that is geometrically valuable but unlikely to be firm.

\begin{corollary}[Weakest-mode targeting]\label{cor:weakest}
Suppose every unit vector $f$ is feasible and every candidate has the same precision.  Then the information-optimal next primitive is
\begin{equation}
f^\star=v_{\max}(P_S)=v_{\min}(J_S),                      \label{eq:weakest}
\end{equation}
namely the mechanically softest unresolved mode.  For $Q=I$, the same direction maximizes one-step MSE reduction.
\end{corollary}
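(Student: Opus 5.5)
The plan is to reduce the corollary to a Rayleigh-quotient maximization using Theorem~\ref{thm:harq}. With every candidate sharing the same precision $\rho>0$ and ranging over all unit vectors $f$, the one-step information gain \eqref{eq:gain} is $\frac12\log(1+\rho f^{\T}P_Sf)$. This is strictly increasing in the scalar $f^{\T}P_Sf$, so maximizing information is the same as maximizing $f^{\T}P_Sf$ over $\|f\|=1$.

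By the Rayleigh--Ritz theorem for the symmetric positive definite matrix $P_S$, that maximum equals $\lambda_{\max}(P_S)$. It is attained exactly by unit vectors in the top eigenspace, hence by $v_{\max}(P_S)$. This is unique up to sign when the largest eigenvalue is simple; otherwise any top eigenvector works, and I will state the result in that form.

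To identify this direction with $v_{\min}(J_S)$, I use $P_S=J_S^{-1}$ from Theorem~\ref{thm:physical}. Since $P_S$ and $J_S$ share eigenvectors and have reciprocal eigenvalues, the top eigenspace of $P_S$ is the bottom eigenspace of $J_S$. Its stiffness $\lambda_{\min}(J_S)$ is the smallest tangent stiffness of the morphology, which justifies calling it the softest unresolved mode.

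For the MSE claim with $Q=I$, \eqref{eq:msegain} gives the reduction
\[
g(f)=\frac{\rho\, f^{\T}P_S^2f}{1+\rho f^{\T}P_Sf}.
\]
This is the step I expect to need care, because the numerator and denominator both depend on $f$. The plan is to expand $f=\sum_k c_kv_k$ in the eigenbasis of $P_S$, with eigenvalues $\lambda_1\ge\cdots\ge\lambda_d$ and $\sum c_k^2=1$. Then
\[
g=\frac{\rho\sum_k c_k^2\lambda_k^2}{1+\rho\sum_k c_k^2\lambda_k}.
\]
Write $a=\sum_k c_k^2\lambda_k\le\lambda_1$. Since $\lambda_k\le\lambda_1$ for each $k$, we have $\sum_k c_k^2\lambda_k^2\le\lambda_1 a$, so $g\le \rho\lambda_1 a/(1+\rho a)$. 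The map $a\mapsto \rho\lambda_1 a/(1+\rho a)$ is increasing, so $g\le\rho\lambda_1^2/(1+\rho\lambda_1)$. Equality is attained at $f=v_1$, which is the same direction $v_{\max}(P_S)$ found above.
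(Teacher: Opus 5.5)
Your proof is correct. The information half (monotonicity of $\frac12\log(1+\rho x)$, Rayleigh--Ritz, and $P_S=J_S^{-1}$ to pass to $v_{\min}(J_S)$) is the paper's argument.

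The difference is in the $Q=I$ step. After diagonalizing $P_S$, the paper notes that the gain is a linear-fractional function of the convex weights $w_k=c_k^2$. Its maximum over the simplex is therefore at a vertex $w=e_k$, where it equals $\rho\lambda_k^2/(1+\rho\lambda_k)$, and monotonicity of this function of $\lambda$ selects the top eigenvalue. You avoid the vertex property by using the termwise bound $\lambda_k^2\le\lambda_1\lambda_k$ and then monotonicity in $a$.

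Your chain is fully elementary. Its equality case, $\sum_k c_k^2\lambda_k(\lambda_1-\lambda_k)=0$, also shows directly that the maximizers are exactly the unit vectors in the top eigenspace of $P_S$.

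The paper's vertex argument generalizes more easily. It carries over unchanged to any task metric $Q$ that shares an eigenbasis with $P_S$, with eigenvalues $\beta_k$. The optimum is then the best vertex value $\rho\beta_k\lambda_k^2/(1+\rho\lambda_k)$, which need not occur at $v_{\max}(P_S)$. Your termwise bound does not extend this way, because it relies on the numerator weights being exactly $\lambda_k$ times the denominator weights.
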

\begin{IEEEproof}
The information score is the Rayleigh quotient $f^{\T}P_Sf$.  For $Q=I$, diagonalizing $P_S$ reduces \eqref{eq:msegreedy} to a linear-fractional function of convex weights on eigenvalues, maximized at the largest eigenvalue because $\rho\lambda^2/(1+\rho\lambda)$ is increasing in $\lambda$.
\end{IEEEproof}

\begin{figure}
\centering
\includegraphics[width=\columnwidth]{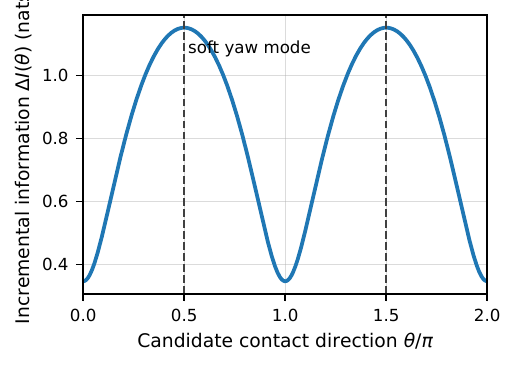}
\caption{Incremental information for a two-mode posterior $P_S=\diag(1,9)$ with unit precision.  A contact direction aligned with the soft second mode supplies the largest conditional information and the largest reduction in uncertainty volume.}
\label{fig:adaptive}
\end{figure}

Fig.~\ref{fig:adaptive} illustrates this weakest-mode targeting rule in two dimensions.
If candidate $i$ succeeds only with probability $q_i$ and costs $c_i$, a natural myopic efficiency score is
\begin{equation}
\frac{q_i}{2c_i}\log\!\left(1+\rho_i f_i^{\T}P_Sf_i\right). \label{eq:costscore}
\end{equation}
This captures the gait--sensor tradeoff in terms of  geometry, contact precision, terrain success probability, time, and energy.

\subsection{Submodularity of Contact Information}
Let a finite library of possible additional primitives be indexed by $\mathcal V$, and define the normalized set function
\begin{equation}
\Phi(A)=\frac12\log\frac{\det\!\left(P_0^{-1}+\sum_{i\in A}\rho_i f_if_i^{\T}\right)}{\det P_0^{-1}}. \label{eq:setinfo}
\end{equation}

\begin{theorem}[Greedy approximation]\label{thm:submod}
The function $\Phi$ is nonnegative, monotone, and submodular.  Under a cardinality budget $|A|\le k$, sequentially choosing the largest marginal gain \eqref{eq:gain} produces a set $A_{\rm gr}$ satisfying
\begin{equation}
\Phi(A_{\rm gr})\ge(1-e^{-1})\max_{|A|\le k}\Phi(A).      \label{eq:submodguarantee}
\end{equation}
\end{theorem}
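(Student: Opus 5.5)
We verify the three structural properties of $\Phi$ directly from the rank-one update formulas of Theorem~\ref{thm:harq}, and then apply the classical Nemhauser--Wolsey--Fisher greedy guarantee for monotone submodular maximization under a cardinality constraint. Throughout, write $J_A=P_0^{-1}+\sum_{i\in A}\rho_if_if_i^{\T}\succ0$ and $P_A=J_A^{-1}$. We assume $\rho_i\ge0$.

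\textbf{Normalization and monotonicity.} First, $\Phi(\emptyset)=0$. For $i\notin A$, the matrix determinant lemma gives the marginal gain
\begin{equation*}
\Phi(A\cup\{i\})-\Phi(A)=\tfrac12\log\!\left(1+\rho_i f_i^{\T}P_Af_i\right)\ge0 .
\end{equation*}
This is exactly \eqref{eq:gain} with $P_S$ replaced by $P_A$. Hence $\Phi$ is monotone, and $\Phi(A)\ge\Phi(\emptyset)=0$ gives nonnegativity. Equivalently, by Theorem~\ref{thm:infocompliance}, $\Phi(A)=I(U;Y_A)$ for the Gaussian prior $P_0$ and independent noises of precision $\rho_i$.

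\textbf{Submodularity.} This is the main step. We must show that for $A\subseteq B$ and $i\notin B$ the marginal gain at $B$ is no larger than at $A$. Since $t\mapsto\frac12\log(1+\rho_i t)$ is nondecreasing, it suffices to prove $f_i^{\T}P_Bf_i\le f_i^{\T}P_Af_i$. We have $J_B=J_A+\sum_{j\in B\setminus A}\rho_jf_jf_j^{\T}\succeq J_A\succ0$. Matrix inversion is operator-antitone on positive definite matrices, so $P_B\preceq P_A$.

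A self-contained alternative avoids the antitonicity fact. Adding elements one at a time, each Sherman--Morrison step \eqref{eq:sherman} subtracts a positive semidefinite matrix from $P$, so $P_B\preceq P_A$ follows by induction. Either way the quadratic-form inequality holds, which gives the diminishing-returns property.

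A second alternative is information-theoretic. Write $\Phi(A)=I(U;Y_A)$. The marginal gain is $I(U;Y_i\mid Y_A)=h(Y_i\mid Y_A)-h(Y_i\mid U)$. Because $Z_i$ is independent of $Z_j$ for $j\ne i$, $h(Y_i\mid U)$ does not depend on $A$. Since conditioning reduces entropy, $h(Y_i\mid Y_A)$ is nonincreasing in $A$. We prefer the matrix argument because it stays in the mechanical language of the paper: added springs can only stiffen, hence every directional compliance can only decrease.

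\textbf{Greedy bound.} With $\Phi$ normalized, monotone, and submodular, the standard argument applies. Let $A^\ast$ be optimal with $|A^\ast|\le k$, and let $A_t$ be the greedy set after $t$ steps. Monotonicity and submodularity give
\begin{equation*}
\Phi(A^\ast)\le\Phi(A_t\cup A^\ast)\le\Phi(A_t)+\sum_{j\in A^\ast}\bigl[\Phi(A_t\cup\{j\})-\Phi(A_t)\bigr]\le\Phi(A_t)+k\,\delta_{t+1},
\end{equation*}
where $\delta_{t+1}$ is the greedy gain at step $t+1$. Setting $g_t=\Phi(A^\ast)-\Phi(A_t)$, this becomes $g_{t+1}\le(1-1/k)g_t$. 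Iterating,
\begin{equation*}
g_k\le(1-1/k)^k\,\Phi(A^\ast)\le e^{-1}\Phi(A^\ast),
\end{equation*}
using $g_0=\Phi(A^\ast)$ from normalization. This yields \eqref{eq:submodguarantee}.

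Choosing the largest marginal gain is the same as choosing the largest $\rho_if_i^{\T}P_{A_t}f_i$, because $\log(1+\cdot)$ is increasing. So the greedy procedure is the rule \eqref{eq:infogreedy} applied sequentially. We may cite Nemhauser--Wolsey--Fisher rather than reproduce this recursion.
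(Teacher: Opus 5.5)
Your proposal is correct and follows the paper's own argument. The paper likewise uses the determinant-lemma marginal gain $\frac12\log(1+\rho_i f_i^{\T}J_A^{-1}f_i)$, gets diminishing returns from operator antitonicity ($J_B\succeq J_A\Rightarrow J_B^{-1}\preceq J_A^{-1}$), and then cites the Nemhauser--Wolsey--Fisher bound. You go a little further by making $\Phi(\emptyset)=0$ and monotonicity explicit, which the paper leaves implicit although the $(1-e^{-1})$ guarantee needs them, and by writing out the recursion $g_{t+1}\le(1-1/k)g_t$ instead of only citing it.
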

\begin{IEEEproof}
For $A\subseteq B$, $J_B\succeq J_A$ and hence $J_B^{-1}\preceq J_A^{-1}$.  The marginal gain of adding $i$ is
\begin{equation*}
\frac12\log(1+\rho_i f_i^{\T}J_A^{-1}f_i),
\end{equation*}
which decreases as the selected set grows.  This proves diminishing returns.  The greedy bound is a classical result  \cite{NemhauserWolseyFisher1978}.
\end{IEEEproof}

\subsection{A Zero-Error Variable-Length Theorem}
A rateless gait provides the strongest finite-redundancy statement in the erasure-only model.  At each attempt, generate a new row direction from an absolutely continuous distribution and observe whether the corresponding contact survives.

\begin{theorem}[Rateless gait optimality]\label{thm:rateless}
Let contact attempts survive independently with probability $q$.  Continue until $d$ surviving random frame coefficients have been collected, and let $\tau$ be the number of attempts.  Then reconstruction is exact with probability one and
\begin{equation}
\tau\sim\operatorname{NegBin}(d,q),\quad
\E\tau=\tfrac{d}{q},\quad
\Var(\tau)=\tfrac{d(1-q)}{q^2}.    \label{eq:negativebinomial}
\end{equation}
Moreover, every zero-error variable-length scalar frame scheme with finite expected stopping time satisfies
\begin{equation}
\E\tau\ge\tfrac{d}{q}.             \label{eq:waldconverse}
\end{equation}
Thus the optimal expected redundancy factor is exactly $1/q$.
\end{theorem}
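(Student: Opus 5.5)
We plan to prove the achievability part first, then the distributional claims, and finally the converse via Wald's identity.

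\emph{Exactness.} Condition on the surviving attempts. The directions $g_1,g_2,\ldots$ attached to successful attempts are i.i.d.\ from an absolutely continuous law on $\R^d$, and they are independent of the survival process. Suppose $k<d$ directions have been collected. Their span is a proper subspace, so it has Lebesgue measure zero. Hence the next surviving direction lies outside it with probability one. By induction, the first $d$ surviving coefficients give a rank-$d$ subframe almost surely. Corollary~\ref{cor:pseudoinverse} in the noiseless case then returns $u$ exactly.

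\emph{Law of $\tau$.} The stopping time $\tau$ is the index of the $d$th success in an i.i.d.\ Bernoulli$(q)$ sequence, which is by definition NegBin$(d,q)$. Write $\tau$ as a sum of $d$ independent Geometric$(q)$ inter-success times. Each has mean $1/q$ and variance $(1-q)/q^2$, which gives \eqref{eq:negativebinomial}.

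\emph{Converse.} Take any zero-error variable-length scheme that transmits one scalar frame coefficient per attempt, with a stopping time $\tau$ adapted to the past survivals and observations and with $\E\tau<\infty$. Let $M_\tau=\sum_{t\le\tau}B_t$ be the number of surviving coefficients.
\begin{enumerate}[label=(\alph*),leftmargin=6mm]
\item We show $M_\tau\ge d$ almost surely. If $M_\tau<d$ on an event of positive probability, condition on that event and its history. The surviving linear measurements $F_{S}^{\T}u$ then have a nontrivial null vector $v$, exactly as in the proof of Theorem~\ref{thm:randomcoding}(iii). The key point is that $u$ and $u+v$ must produce the same observations and the same stopping decision along that history. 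This holds because the frame directions and the stopping rule depend only on the observed data, which coincide for the two commands. So some command is decoded incorrectly on a nonnull set, contradicting zero error.
\item Each $B_t$ is independent of $\mathcal F_{t-1}$, and $\{\tau\ge t\}\in\mathcal F_{t-1}$. Wald's identity for the martingale $M_t-qt$, stopped at an integrable $\tau$, then gives $\E M_\tau=q\,\E\tau$.
\item Combining (a) and (b) yields $q\,\E\tau\ge d$, which is \eqref{eq:waldconverse}.
\end{enumerate}
Matching this bound with $\E\tau=d/q$ from the rateless scheme shows that the optimal expected redundancy factor is exactly $1/q$.

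\emph{Main obstacle.} The delicate step is (a): making the indistinguishability argument rigorous for adaptive schemes. Frame directions and stopping may depend on earlier observed coefficient values, so we need, for every command $u$, a perturbation $v$ that leaves the entire observation path unchanged. We would handle this by working pathwise with the worst-case (all commands) notion of zero error. Fix the realization of the survival pattern and the first $u$. The stopped path then determines a finite set of surviving directions with rank below $d$. Any null vector $v$ of that set yields identical observations at every step, because each subsequent direction is a function of identical past observations, so by induction the paths agree. If instead the guarantee is only required to hold in probability over a prior on $U$, then an absolutely continuous prior makes the confusion set have positive measure, and the same conclusion follows.
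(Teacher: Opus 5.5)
Your proposal is correct and follows the paper's proof step for step: almost-sure independence of the absolutely continuous surviving directions, the negative binomial law of the $d$th success, and the converse from $M_\tau\ge d$ combined with Wald's identity $\E M_\tau=q\,\E\tau$, with your indistinguishability argument filling in the one line the paper leaves unproved. One caution on that step concerns schemes with internal randomization, such as your own random-direction gait. The pathwise per-command version is airtight only when directions are deterministic functions of past data, since the stopped survival string is then discrete and a single fixed command $u+v$ is misdecoded with positive probability. With randomization, a scheme may stop after one survival whenever the observed value is exactly $0$ and output $0$; this is zero-error for every $u$, yet has $\E\tau=1/q$ at $u=0$. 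For such schemes the bound therefore holds only on average over an absolutely continuous prior, or for almost every $u$, which is the fallback formulation you already mention.
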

\begin{IEEEproof}
Surviving absolutely continuous random vectors are linearly independent until $d$ have arrived, almost surely.  The stopping time for $d$ Bernoulli successes is negative binomial, giving \eqref{eq:negativebinomial}.  Any zero-error scheme for arbitrary $u\in\R^d$ needs at least $d$ surviving scalar linear observations.  Let $K_\tau=\sum_{i=1}^{\tau}B_i$.  Then $K_\tau\ge d$ almost surely.  Wald's identity gives $\E K_\tau=q\E\tau$, hence \eqref{eq:waldconverse}.
\end{IEEEproof}

Under a hard deadline $n$, the residual failure probability is exactly $\Pp\{\operatorname{Binomial}(n,q)<d\}$, recovering Theorem~\ref{thm:randomcoding}.  Repeating the same $f_i$ is Chase combining from communication theory \cite{chase1985code}, but it is preferable to supply a new complementary $f_i$ as in incremental redundancy, especially when a different body mode remains soft.

\subsection{A Minimal Robophysical Example}
\label{sec:plate-example}

Consider a rigid plate with $N\ge4$ vertical series-elastic legs attached
at equally spaced points on a circle of radius $a>0$.
Write $\xi=(h,\theta,\phi)^T$ for its small heave, pitch, and roll
perturbations about a level, gravity-balanced operating point. Nominal weight
and preload forces are balanced before applying the incremental model,
and the tested active set is held fixed during relaxation.
With attachment azimuths $\alpha_i=2\pi i/N$, $i=0,\ldots,N-1$,
the attachment positions are $(a\cos\alpha_i,a\sin\alpha_i,0)$,
and the affine incremental extension model is
\begin{align}
  \delta\ell_i&=h-a\theta\cos\alpha_i+a\phi\sin\alpha_i=g_i^T\xi,\notag\\
  g_i&=(1,-a\cos\alpha_i,a\sin\alpha_i)^T.
  \label{eq:plate-geometry}
\end{align}
Here the signs fix the pitch and roll conventions; the physical
Jacobians $g_i$ are not themselves the normalized harmonic columns.

Choose a reference displacement $\ell_*>0$ and introduce dimensionless
modal coordinates
\begin{align}
  q&=T\xi,\qquad U=T\xi_{\mathrm{cmd}},\notag\\
  T&=\frac{\sqrt N}{\ell_*}
       \operatorname{diag}\!\left(1,-\frac{a}{\sqrt2},
                                      \frac{a}{\sqrt2}\right),
  \label{eq:plate-coordinates}
\end{align}
where $\xi_{\mathrm{cmd}}$ is the desired physical pose perturbation.
For the $K=1$ harmonic frame,
\begin{equation}
  f_i=\frac1{\sqrt N}
       \begin{bmatrix}1&\sqrt2\cos\alpha_i&\sqrt2\sin\alpha_i\end{bmatrix}^T,
  \qquad FF^T=I_3,
  \label{eq:plate-frame}
\end{equation}
substitution gives the exact identity within the linearized model
\begin{equation}
  \delta\ell_i=\ell_*f_i^Tq.
  \label{eq:plate-length}
\end{equation}
Thus the physical commanded rest-length \emph{increment} is
$\ell_*f_i^TU=g_i^T\xi_{\mathrm{cmd}}$, not simply $f_i^TU$. The same transformation $T$ is retained for all active sets.

Let the effective normalized rest-length data be
$Y_i=f_i^TU+Z_i$, with independent Gaussian errors of variance
$\rho_i^{-1}$, independent of the command and the prescribed contact
selection. For realized data $y_i$, an engaged spring of physical
stiffness $k_i$ has incremental energy
\begin{equation}
  E_i(q)=\frac{k_i\ell_*^2}{2}(f_i^Tq-y_i)^2.
  \label{eq:plate-spring}
\end{equation}
To realize the statistical weights mechanically, select or calibrate
stiffnesses so that
\begin{equation}
  k_i\ell_*^2=\kappa\rho_i
  \label{eq:plate-calibration}
\end{equation}
for one common energy scale $\kappa>0$.
An intrinsic centering mechanism
with energy
$E_0(q)=\kappa(q-\mu)^TP_0^{-1}(q-\mu)/2$
then makes the total incremental energy exactly $E_S(q)=\kappa V_S(q)$.
For $U\sim\mathcal N(\mu,P_0)$ with $P_0\succ0$, mechanical
equilibrium in this model therefore gives
\begin{equation}
  \xi_S^*=T^{-1}q_S^*
         =T^{-1}\mathbb E[U\mid Y_S=y_S,S].
  \label{eq:plate-equilibrium}
\end{equation}
Omitting the centering stiffness instead gives weighted least squares
whenever the surviving columns span $\mathbb R^3$.

Measured stiffness and covariance must also be compared in common
coordinates. With $K_S^{\mathrm{phys}}$ denoting the Hessian of physical
energy with respect to $\xi$, and $\Sigma_{\xi,S}$ the posterior
covariance of the commanded physical pose,
\begin{align}
  K_S^{\mathrm{phys}}&=\kappa T^TJ_ST,\notag\\
  \Sigma_{\xi,S}&=T^{-1}P_ST^{-T}
               =\kappa(K_S^{\mathrm{phys}})^{-1}.
  \label{eq:plate-stiffness}
\end{align}
Likewise, a physical pose-error metric $W_\xi\succeq0$ becomes
$Q=T^{-T}W_\xi T^{-1}$ in the normalized coordinates.
Among equal-cost candidates that
engage successfully, adding the feasible support with largest score
$\rho_i f_i^TP_Sf_i$, while retaining the existing constraints and the
same command, implements the information-greedy update.

\section{Filter-Bank Gaits and Colored Terrain}
The finite-frame model treats a gait epoch as a block.  Continuous locomotion requires streams of body-mode commands and contact actions.  Let $u[t]\in\R^m$ and let an $N\times m$ oversampled analysis filter bank be
\begin{equation}
H(z)=\sum_{\ell=0}^{L_h-1}H[\ell]z^{-\ell}.               \label{eq:filterbank}
\end{equation}
Leg stream $i$ receives
\begin{equation}
c_i[t]=\sum_{r=1}^m\sum_{\ell}h_{ir}[\ell]u_r[t-\ell].    \label{eq:gaitstream}
\end{equation}
Spatial phase shifts, body waves, and contact waves provide physically feasible bases for $H(z)$.

For a fixed surviving stream set $S$, stable noiseless reconstruction requires the filter-bank frame condition
\begin{equation}
A_S I_m\preceq H_S(e^{j\omega})^*H_S(e^{j\omega})\preceq B_S I_m \label{eq:fbframe}
\end{equation}
for almost every $\omega$, with $A_S>0$ \cite{KovacevicDragottiGoyal2002,Vaidyanathan1993}.  The canonical synthesis response is
\begin{equation}
G_S(e^{j\omega})=
[H_S^*(e^{j\omega})H_S(e^{j\omega})]^{-1}H_S^*(e^{j\omega}). \label{eq:fbsynthesis}
\end{equation}

Now let the desired gait stream and terrain noise be jointly stationary Gaussian with spectral densities $S_U(\omega)$ and $S_Z(\omega)$.  The posterior error spectrum is
\begin{equation}
P_S(\omega)=\left[S_U^{-1}(\omega)+H_S^*(\omega)S_Z^{-1}(\omega)H_S(\omega)\right]^{-1}, \label{eq:posteriorspectrum}
\end{equation}
and the mutual information rate is
\begin{equation}
\mathcal I_S=\frac1{4\pi}\int_{-\pi}^{\pi}
\log\det\!\left[I+S_U^{1/2}H_S^*S_Z^{-1}H_SS_U^{1/2}\right]d\omega. \label{eq:inforate}
\end{equation}
A task transfer matrix $L(\omega)$ gives locomotion distortion rate
\begin{equation}
\mathcal D_{\loc}=\frac1{2\pi}\int_{-\pi}^{\pi}
\tr[L(\omega)P_S(\omega)L^*(\omega)]d\omega.             \label{eq:distortionrate}
\end{equation}

This formulation clarifies the moving-average observation.  Equal spatial averaging suppresses high-spatial-frequency noise but leaves common-mode or low-frequency terrain disturbances.  Whitening instead weights modes by $S_Z^{-1}(\omega)$.  Mechanically, diagonal precision corresponds to independent leg springs; off-diagonal or frequency-selective precision requires cross-leg elastic or viscoelastic coupling.  Interleaving the coefficients of one body mode across spatially and temporally separated contacts can transform bursty terrain loss into dispersed erasures.


Just as a note, an arbitrary Wiener synthesis bank is not automatically realizable by a passive, causal morphology.  Positive-realness, locality, actuator limits, unilateral contact, and network-synthesis constraints must be imposed.  Pushing beyond the quasi-static rank-one construction of Lemma~\ref{lem:rankone} requires a separate problem of exact dynamic mechanical realization.

\section{Conclusion}
A multilegged robot need not use every leg to repeat the same thrust action.  It can distribute a multidimensional locomotion command across heterogeneous contacts as a finite-frame expansion.  Rough terrain then causes coefficient erasures and noise, while a contact-gated compliant body physically synthesizes the command from the surviving subframe as a form of embodied intelligence.  In the linear-Gaussian setting, equilibrium is the MMSE estimate, stiffness is posterior precision, compliance is posterior covariance, and mutual information is logarithmic stiffness-volume gain.

This formulation yields both fundamental limits and optimal designs.  Equal-norm tight frames make contacts equally dispensable; low coherence protects against multiple losses; and harmonic frames give a gait-compatible construction.  Random frames provide reliable recovery below the surviving-contact fraction and a rank converse above it, establishing finite redundancy per useful locomotion dimension.  Near the threshold, noise amplification and physical softness diverge together.  Feedback can then add only the contact constraint that addresses the weakest unresolved task mode, with exact one-step and greedy approximation guarantees.

The present results assume a linearized command-to-contact map, quasi-static equilibrium, and a task-relevant body coordinate system of fixed dimension within each block.  They also assume contact loss gates an elastic path cleanly, and that slipping or partial contacts are a form of Gaussian noise.  Future work should consider properties of real robots that add friction cones, saturation, collision constraints, inertia, stability failures, and contact-dependent nonlinear geometry.  These effects may change the feasible frame class and may make the energy nonquadratic.  In that case equilibrium is generally a MAP-like robust estimator rather than exact MMSE.

\bibliographystyle{IEEEtran}
\bibliography{references}

\appendix[Further Proof Details]
\subsection{Proof of Proposition~\ref{prop:framebound}}
Under \eqref{eq:white}, the least-squares error is
\begin{equation*}
\widehat U-U=(F_SF_S^{\T})^{-1}F_SZ_S.
\end{equation*}
Its covariance is $\sigma^2(F_SF_S^{\T})^{-1}$.  Proposition~\ref{prop:locmmse} gives the equality in \eqref{eq:frameerror}.  Since $(F_SF_S^{\T})^{-1}\preceq A_S^{-1}I$, the upper bound follows.  The settling-time statement is Proposition~\ref{prop:settling} with $J_S=\sigma^{-2}F_SF_S^{\T}$ after absorbing the common stiffness scale into $c$.

\subsection{Finite-Length Bounds for Theorem~\ref{thm:randomcoding}}
For $d_N/N<R_0<q$, Chernoff's inequality gives
\begin{equation}
P_{\fail}^{(N)}
=\Pp\{M_N<d_N\}
\le \exp[-N\Dkl(d_N/N\|q)].                              \label{eq:chernoff}
\end{equation}
A matching logarithmic lower bound follows from the method of types, yielding \eqref{eq:exponent}.  Conversely, if $R>q$, then for any $\delta\in(0,R-q)$,
\begin{equation*}
\Pp\{M_N\ge d_N\}
\le\Pp\{M_N/N\ge q+\delta\}\to0.
\end{equation*}
The strong converse is thus a dimension-counting statement coupled to concentration of the number of surviving contacts.

\subsection{Posterior-Mean Update}
The posterior mean after candidate $i$ can be written in Kalman form:
\begin{equation}
\widehat U_{S+i}=\widehat U_S+
\frac{P_Sf_i}{\rho_i^{-1}+f_i^{\T}P_Sf_i}
\left(Y_i-f_i^{\T}\widehat U_S\right).                   \label{eq:kalmanupdate}
\end{equation}
The term in parentheses is the mechanical innovation.  In a passive realization, the new spring adds $\rho_i f_if_i^{\T}$ to stiffness and relaxation performs \eqref{eq:kalmanupdate} without an explicit matrix inversion.

\end{document}

%% file: figures/fig1.tikz
\begin{tikzpicture}[
  x=1cm,y=1cm,
  font=\fontsize{8.5}{10}\selectfont,
  line cap=round,line join=round,
  >={Latex[length=1.7mm,width=1.2mm]},
  panel/.style={draw=black!25,fill=black!2,rounded corners=2pt,
                line width=.45pt},
  title/.style={font=\sffamily\bfseries\fontsize{8.5}{10}\selectfont,
                align=center,text width=3.9cm},
  note/.style={font=\fontsize{8}{9.5}\selectfont,
               align=center,text width=3.7cm},
  box/.style={draw=black!65,fill=white,rounded corners=2pt,
              line width=.6pt,align=center,inner sep=4pt},
  flow/.style={->,line width=.85pt},
  feedback/.style={->,line width=.7pt,densely dashed},
  spring/.style={line width=.75pt,decorate,
     decoration={coil,aspect=.42,segment length=3pt,amplitude=2.3pt}},
  support/.style={line width=.65pt},
  missing/.style={draw=black!45,densely dashed,line width=.65pt}
]

\foreach \x in {0,4.5,9,13.5}
  \path[panel] (\x,0) rectangle ++(4.1,5.55);

\begin{scope}
  \node[title] at (2.05,5.20) {(a) FRAME ENCODING};
  \node[note] at (2.05,4.83) {body modes to local commands};

  \node[box,minimum width=2.95cm,minimum height=.62cm]
    (source) at (2.05,4.17) {$U\in\mathbb R^d$};
  \node[box,minimum width=2.95cm,minimum height=.62cm]
    (analysis) at (2.05,3.10) {$F^T$\,: linear analysis};
  \draw[flow] (source.south) -- (analysis.north);

  \coordinate (bus) at (2.05,2.52);
  \draw[support] (analysis.south) -- (bus);
  \draw[support] (.68,2.52) -- (3.42,2.52);
  \foreach \x/\lab in {.68/1,1.67/2,3.42/N}{
    \node[box,minimum width=.62cm,minimum height=.48cm,
          inner sep=2pt] (coef\lab) at (\x,1.97) {$C_{\lab}$};
    \draw[flow] (\x,2.52) -- (coef\lab.north);
  }
  \node at (2.58,1.97) {$\cdots$};
  \node at (2.05,1.28) {$C_i=f_i^TU$};
  \node[note] at (2.05,.53)
    {heterogeneous mixtures\\
     $R_{\mathrm{loc}}=d/N$};
\end{scope}

\begin{scope}[xshift=4.5cm]
  \node[title] at (2.05,5.20) {(b) TERRAIN CHANNEL};
  \node at (2.05,4.60) {$Y_S=F_S^TU+Z_S$};

  \path[draw=black!70,fill=black!12,line width=.7pt]
    (.35,3.64) rectangle (3.75,3.89);
  \draw[support] (.65,3.64) -- (.48,2.78) -- (.67,1.91);
  \draw[missing] (1.45,3.64) -- (1.64,2.80) -- (1.49,2.02);
  \draw[support] (2.45,3.64) -- (2.20,2.82) -- (2.42,2.02);
  \draw[missing] (3.35,3.64) -- (3.51,2.80) -- (3.29,1.94);

  \draw[line width=.9pt]
    (.22,1.85) -- (.67,1.91) -- (1.02,1.89)
    -- (1.21,1.34) -- (1.83,1.34) -- (2.01,1.94)
    -- (2.42,2.02) -- (2.95,1.86) -- (3.29,1.94)
    -- (3.87,2.00);
  \fill (.67,1.91) circle (1.8pt);
  \fill (2.42,2.02) circle (1.8pt);
  \draw[fill=white,line width=.7pt] (3.29,1.94) circle (2pt);
  \draw[line width=.75pt] (1.39,1.92) -- (1.59,2.12);
  \draw[line width=.75pt] (1.39,2.12) -- (1.59,1.92);

  \node[font=\fontsize{8}{9}\selectfont] at (.66,1.01) {firm};
  \node[font=\fontsize{8}{9}\selectfont] at (1.51,1.01) {erased};
  \node[font=\fontsize{8}{9}\selectfont] at (3.30,1.01) {weak};
  \node[note] at (2.05,.47)
    {$S=\{i:B_i=1\}$\\weak contacts: smaller $\rho_i$};
\end{scope}

\begin{scope}[xshift=9cm]
  \node[title] at (2.05,5.20) {(c) COMPLIANT DECODER};
  \node[note] at (2.05,4.76) {fixed active set $S$};
  \node at (2.05,4.34) {body coordinates $q$};
  \path[draw=black!70,fill=black!12,line width=.7pt]
    (.40,3.86) rectangle (3.70,4.04);
  \foreach \x in {.70,2.37,3.35}{
    \draw[support] (\x,3.86) -- (\x,3.64);
    \draw[spring] (\x,3.64) -- (\x,2.96);
    \draw[support] (\x,2.96) -- (\x,2.77);
    \draw[support] (\x-.18,2.77) -- (\x+.18,2.77);
    \foreach \dx in {-.13,0,.13}
      \draw[line width=.4pt]
        (\x+\dx,2.77) -- ++(-.08,-.10);
  }
  \draw[missing] (1.50,3.86) -- (1.50,3.50);
  \draw[missing] (1.50,3.15) -- (1.50,2.77);
  \draw[line width=.65pt] (1.41,3.24) -- (1.59,3.42);
  \draw[line width=.65pt] (1.41,3.42) -- (1.59,3.24);
  \node[note] at (2.05,2.35) {each active path adds $\rho_i f_i f_i^T$};

  \node at (2.05,1.70)
    {$\displaystyle J_S=P_0^{-1}+\sum_{i\in S}\rho_i f_i f_i^T$};
  \node at (2.05,1.08)
    {$\widehat U=q_S^\star,\qquad P_S=J_S^{-1}$};
  \node[note] at (2.05,.45)
    {stiffness $\leftrightarrow$ precision\\
     compliance $\leftrightarrow$ covariance};
\end{scope}

\begin{scope}[xshift=13.5cm]
  \node[title] at (2.05,5.20) {(d) TASK AND FEEDBACK};
  \node at (2.05,4.60) {$x^+=x+L\widehat U+w$};

  \fill (.45,3.33) circle (1.8pt);
  \node[anchor=north] at (.45,3.18) {$x$};
  \draw[flow] (.45,3.33)
    .. controls (1.25,3.98) and (2.05,3.00) .. (3.17,3.53);
  \draw[line width=.7pt,rotate around={18:(3.24,3.54)}]
    (3.24,3.54) ellipse (.40 and .24);
  \fill (3.24,3.54) circle (1.6pt);
  \node[anchor=south] at (3.24,3.83) {$x^+$};
  \node[note] at (2.05,2.67)
    {$D_{\mathrm{loc}}(S)=\operatorname{tr}(QP_S)$};

  \node[box,minimum width=3.66cm,minimum height=1.15cm,
        text width=3.42cm] at (2.05,1.48)
    {\textbf{Information-greedy choice}\\[4pt]
     $\displaystyle i_I^\star\in\arg\max_i\,
       \rho_i f_i^TP_Sf_i$};
  \node[note] at (2.05,.47)
    {feasible direction,\\
     largest information gain};
\end{scope}

\foreach \x in {4.1,8.6,13.1}
  \draw[flow] (\x+.03,2.65) -- ++(.34,0);

\draw[feedback]
  (15.55,0) -- (15.55,-.70) -- (6.55,-.70) -- (6.55,0);
\node[fill=white,inner sep=3pt,align=center,
      font=\fontsize{8}{9.5}\selectfont,text width=6.75cm]
  at (11.05,-.70)
  {optional: next contact for the same $U$\\
   previous constraints must remain available};
\end{tikzpicture}